\documentclass[sigconf,nonacm]{aamas} 

\usepackage{algorithm}
\usepackage{algorithmic}

\usepackage{balance}
\usepackage{enumitem}

\doi{HTED8429}

\setcopyright{ifaamas}
\acmConference[AAMAS '26]{Proc.\@ of the 25th International Conference
on Autonomous Agents and Multiagent Systems (AAMAS 2026)}{May 25 -- 29, 2026}
{Paphos, Cyprus}{C.~Amato, L.~Dennis, V.~Mascardi, J.~Thangarajah (eds.)}
\copyrightyear{2026}
\acmYear{2026}
\acmDOI{}
\acmPrice{}
\acmISBN{}

\acmSubmissionID{<<OpenReview submission id>>}

\title[AAMAS-2026 Formatting Instructions]{Interbank Lending Games}

\author{Jinyun Tong}
\affiliation{
  \institution{King's College London}
  \city{London}
  \country{United Kingdom}}
\email{jinyun.tong@kcl.ac.uk}

\author{Bart de Keijzer}
\affiliation{
  \institution{King's College London}
  \city{London}
  \country{United Kingdom}}
\email{bart.de\_keijzer@kcl.ac.uk}

\author{Haoxiang Wang}
\affiliation{
  \institution{King's College London}
  \city{London}
  \country{United Kingdom}}
\email{whx@kcl.ac.uk}

\author{Carmine Ventre}
\affiliation{
  \institution{King's College London}
  \city{London}
  \country{United Kingdom}}
\email{carmine.ventre@kcl.ac.uk}

\begin{abstract}
    We define and study a lending game to model the interbank money market, in which lending banks strategically allocate their cash to borrowing banks. The interest rate offered by each borrowing bank is within the interest rate corridor set by the central bank and ultimately depends on the demand and the supply of cash in the interbank market. Lending banks naturally aim to maximise the income coming from the interest repayments. In its purest form, this is an infinite-strategy game that we show to be an exact potential game which has a unique pure strategy Nash equilibrium. We then define and solve a constrained optimisation problem and propose a strongly polynomial-time algorithm to compute this Nash equilibrium. We also study some variants of best-response dynamics of this lending game, showing that they converge to the Nash equilibrium in both discrete and continuous-time scenarios.
\end{abstract}

\keywords{Potential Game, Infinite Strategy Space, Pure Nash Equilibrium, Best-response Dynamics}

\newcommand{\BibTeX}{\rm B\kern-.05em{\sc i\kern-.025em b}\kern-.08em\TeX}

\begin{document}


\pagestyle{fancy}
\fancyhead{}


\maketitle 


\section{Introduction}
Bank operations are constrained by their funding liquidity. Banks' cash inflows and outflows may not be aligned in time, exposing such banks to default risks, as they cannot repay their due debts, even though their balance sheets may show that they are solvent. To avoid liquidity defaults, banks may borrow cash reserves from other banks. The ensuing interbank money market serves as a place where banks can borrow from or lend to one another on a short-term basis. The interbank money market is a typical over-the-counter market where banks conduct transactions through direct contact or broker matching. Such an over-the-counter market allows borrowing and lending banks the flexibility to customise their loan terms, such as lending amount, maturity and interest rate.

Interest rates in the interbank money market are usually influenced by the central bank through interest rate policies and open market operations. For example, the European Central Bank (ECB) uses the interest rate corridor policy to effectively control the market interest rate fluctuations of the interbank money market within the target range \cite{brandao2024ecb}. Specifically, the ECB sets two interest rates, the \textit{deposit facility rate} and the \textit{marginal lending facility rate} \cite{ECB2023key}. When commercial banks have excess liquidity, they can choose to deposit the excess liquidity in the overnight deposit facility of the ECB, and the ECB will provide them with a relatively low deposit facility rate. On the other hand, if a bank cannot borrow sufficient liquidity through the interbank money market to fill its liquidity gap, it must apply for an overnight loan from the ECB to avoid default; the loan will then come with a relatively high marginal lending facility rate. Thus, the interest rate in the interbank money market will not be lower than the deposit facility rate, since otherwise, banks will deposit liquidity into the central bank instead of lending it out. Moreover, the market interest rate will not be higher than the lending facility rate, for otherwise, banks would rather apply for overnight loans from the central bank than borrow in the interbank money market. Therefore, these two important interest rates set by the central bank are usually considered the boundaries of the interest rate corridor, and market interest rates fluctuate between them. Naturally, interest rates in the interbank money market are also affected by supply and demand. In general, the higher the supply of funds in the market, the lower the market interest rate, and vice versa. In this work, we study the interest rate corridor policy, where the individual interest rate is determined by the interest rate corridor and the ratio between the supply available to the borrowing banks and their demand.

In this context, we focus on bank incentives and strategic decisions in the interbank money market, especially the lending amounts and interest rates at equilibrium under certain market conditions. We achieve this goal by employing a game-theoretical analysis of a strategic game with an infinite strategy space. Our work may thus contribute to understanding the incentive of interbank lending and to making the interest rate corridor policy. We model money as a continuous quantity, in that the lending amounts can be any non-negative real number within the given constraints. This allows us to capture the nature of the incentives and dynamics by avoiding the distortions due to the chosen denomination of one unit of cash and the granularity of exchange rates.

\subsection{Our Contributions}
We introduce \textit{interbank lending games}: a class of non-cooperative games, which we define to study bank behaviour in the interbank money market. In Section \ref{section:preliminaries}, we formally define our model of the interbank money market and the lending game, following the aforementioned fundamental market rules. Then, we define a potential function and show that the lending game is an exact potential game with the proposed potential function. In Section \ref{section:equilibrium}, we first prove that this lending game has a unique pure Nash equilibrium by showing that the potential function is a strictly concave function that has a global maximum over the strategy space. Then, we give an explicit formulation of the unique pure Nash equilibrium by solving a constrained optimisation problem. In addition, since this solution is not trivial to compute, we design an algorithm to iteratively compute the pure Nash equilibrium in strongly polynomial time.

A Nash equilibrium is not necessarily the ``final'' strategy profile that the agents attain in a multi-agent system, considering that players may never reach it from a given initial strategy profile. Therefore, we are also interested in whether the lending banks in our model are guaranteed to converge to the Nash equilibrium through natural iterative dynamics. In Section \ref{section:dynamics}, we define and analyse some variants of the best-response dynamics of the lending game in both discrete-time and continuous-time scenarios. We show that in the discrete-time scenario, the \textit{eager best-response dynamics} and the \textit{randomised best-response dynamics} converge to the pure Nash equilibrium under asynchronous updates (i.e., where in each iteration a single agent best-responds). For synchronous updates (i.e., where all agents simultaneously best-respond to the given strategy profile), we apply \textit{pseudo-gradient dynamics} and prove the convergence of the learning agents to the pure Nash equilibrium. Finally, we also show the convergence of continuous-time best-response dynamics by defining a Lyapunov function.

\subsection{Related Work}
Monderer and Shapley \cite{monderer1996potential} pioneered some definitions of various \textit{potential games}, in which a global potential function captures the incentive of all players to change their strategies. They showed that any \textit{finite potential game} (i.e., potential game with a finite number of players and finite strategy sets) must have at least one pure Nash equilibrium. From the perspective of the improvement path, the existence of equilibrium can be explained by the finite improvement property as the game has a finite number of strategy profiles and every improvement path is also finite \cite{monderer1996potential}. Nevertheless, for \textit{infinite potential games} (i.e., potential games with an infinite number of players or infinite strategy sets), this property is not guaranteed \cite{voorneveld1997equilibria}. For games with an infinite number of players, the concepts of \textit{large population games} \cite{huang2006large} and \textit{mean-field games} \cite{lasry2007mean} were developed. On the other hand, if each player has a continuum as their strategy set (as is the case in our lending games), the game will have an infinite strategy set. In this case, even if it has a potential function, the existence of equilibria is not a given and relies on additional properties of the strategy space and potential function.

Monderer and Shapley \cite{monderer1996potential} were the first to explain the relationship between Nash equilibria of potential games and the extreme values of the potential functions. They proved that each Nash equilibrium of the potential game corresponds to a local maximiser of the potential function, and vice versa. This conclusion provides a new optimisation perspective for equilibrium analysis and has inspired a series of subsequent studies. Neyman \cite{neyman1997correlated} showed that for a potential game with a compact and convex strategy set, if its potential function is smooth and concave, correlated equilibria of the game are mixtures of pure Nash equilibria that maximise the potential function. This result was later strengthened in various ways, by Ui \cite{ui2008correlated}, Christensen \cite{christensen2017necessary}, Cai et al. \cite{cai2019role}, and Cao et al. \cite{cao2025correlated}. Based on this literature, an effective method of finding and characterising the pure Nash equilibria of a potential game is to find local maxima of the potential function. This translates the problem of finding an equilibrium to solving an optimisation problem. In particular, if the objective function of this associated optimisation problem is convex (or concave, for maximisation), it can in important cases be solved in weakly polynomial time through convex quadratic programming \cite{kozlov1980polynomial,ye1989extension}. It turns out that this approach indeed also applies to our class of lending games.

In terms of the convergence of the best-response dynamics, Hofbauer and Sorin \cite{hofbauer2006best} studied continuous-time best-response dynamics in continuous concave-convex zero-sum games, proving that its trajectory converges to the set of saddle points of the game. Barron et al. \cite{barron2010best} generalised this conclusion to differential best-response dynamics in continuous non-concave non-convex games; however, they found that for three-player non-zero-sum games, the convergence of the dynamics is not guaranteed. Leslie et al. \cite{leslie2020best} defined and analysed best-response dynamics in two-player zero-sum stochastic games and proved convergence to the set of Nash equilibrium strategies. Swenson et al. \cite{swenson2018best} showed that the best-response dynamics of almost all \text{regular potential games} which satisfy the regularity conditions introduced in \cite{harsanyi1973oddness} always converge to a pure Nash equilibrium, from almost every initial state. They also established that the convergence trajectory for almost every initial condition is unique, and analysed its convergence rate.

Another related line of research focuses on applying game-theoretic analysis to systemic risk in financial networks. Building on the agent-based model in \cite{eisenberg2001systemic}, a number of papers study both computational complexity and game-theoretic questions with the solvency risk of financial institutions. For example, the extent to which balance sheet transformations (e.g., forgiving debts) can lead to better outcomes is considered in \cite{kanellopoulos2022forgiving,kanellopoulos2023debt,tong2024reducing,tong2024selfishly,papp2020network}. The hardness of computing risk exposure in the presence of financial derivatives is considered in \cite{schuldenzucker2017finding,ioannidis2022financial,ioannidis2022strong}. The literature on liquidity risk is somewhat sparser with strategies to balance solvency and liquidity considered in \cite{cont2020liquidity} and analysed in a network context in \cite{zhao2023liquidity}.

\section{Preliminaries}\label{section:preliminaries}
In this section, we formally define the interbank money market and the lending game based on the fundamental market rules and mechanisms discussed in the introduction.

\subsection{Lending Games}
An \textit{interbank lending game} $G$ (or simply a \textit{lending game}) is given by the tuple $(m,n,\boldsymbol{c},\boldsymbol{d},r_{\text{max}},r_{\text{min}})$, where $\boldsymbol{c}=(c_{1},\dots,c_{m})\in\mathbb{R}_{>0}^{m}$ and $\boldsymbol{d}=(d_{1},\dots,d_{n})\in\mathbb{R}_{>0}^{n}$. For such a game, we refer to $L=\{1,\dots,m\}$ as the set of \textit{lenders} and we refer to $B=\{1,\ldots,n\}$ as the set of \textit{borrowers}. The set $L$ takes the role of the player set. The value $c_{i}$ represents the cash reserves available for lending of lender $i\in L$, which is called $i$'s \textit{budget}, and $d_{j}$ represents the borrowing demand of borrower $j\in B$. The values $r_{\text{min}}$ and $r_{\text{max}}$ represent the minimum and maximum interest rates set by the central bank. We proceed below to define the strategy sets and the utility functions of each of the lenders. In the subsequent sections, we use $G$ to refer to a lending game $(m,n,\boldsymbol{c},\boldsymbol{d},r_{\text{max}},r_{\text{min}})$.

\textbf{Strategies.} Each lender $i\in L$ proposes a lending vector $s_{i}=(s_{i1},\dots,s_{in})$, where $s_{ij}$ represents the budget allocated to the borrower $j\in B$, which can be any non-negative real number within the feasibility constraints in (\ref{equation:strategy_set}) below. This lending vector is called $i$'s \textit{strategy}. The \textit{strategy set} of each lender $i\in L$ is defined as
\begin{align}\label{equation:strategy_set}
    S_{i}=\left\{s_{i}\in\mathbb{R}_{\geq0}^{n} \ \middle\vert \ \sum_{j\in B} s_{ij}\leq c_{i}\right\}
\end{align}
which implies that the total lending amount of each lender cannot exceed its budget. The \textit{strategy space} of the game $G$, denoted by $\boldsymbol{S}$, is given by the joint strategy set of all lenders, i.e.,
\begin{align*}
    \boldsymbol{S}=\bigtimes_{i\in L}S_{i}.
\end{align*}
A \textit{strategy profile} defines a vector in which each lender selects its own strategy, corresponding to a group of \textit{lending amounts} (denoted by $s_{ij}$), i.e.,
\begin{align*}
    \boldsymbol{s}=(s_{i})_{i\in L}=(s_{ij})_{i\in L,j\in B}\in\boldsymbol{S}.
\end{align*}
For a given lender $i\in L$, let $\boldsymbol{s}_{-i}$ denote the strategy profile of all lenders except $i$, i.e., $\boldsymbol{s}_{-i}=(s_{k})_{k\in L\setminus\{i\}}\in\boldsymbol{S}_{-i}$, where the reduced strategy space is given by $\boldsymbol{S}_{-i}=\bigtimes_{k\in L\setminus\{i\}}S_{k}$. For a lender $i\in L$, strategy $s_{i}\in S_{i}$ and strategy profile $\boldsymbol{s}_{-i}\in\boldsymbol{S}_{-i}$, we use the standard notation $(s_{i},\boldsymbol{s}_{-i})$ to denote the strategy profile in $\boldsymbol{S}$ where lender $i$'s strategy is $s_{i}$ and the remaining lenders play according to $\boldsymbol{s}_{-i}$.

\textbf{Utility.} Each lender is assumed to strategise so as to maximise a utility function which is derived from the interest payments of funded borrowers. For a given strategy profile $\boldsymbol{s}\in\boldsymbol{S}$, the \textit{interest rate} offered by a borrower $j\in B$ is defined by the following linear function,
\begin{align}\label{equation:interest_rate}
    r_{j}(\boldsymbol{s})=(r_{\text{min}}-r_{\text{max}})\frac{\sum_{i\in L}s_{ij}}{d_{j}}+r_{\text{max}}
\end{align}
where $r_{\text{min}}$ and $r_{\text{max}}$ are the minimum and maximum interest rates determined by regulators, respectively, satisfying $0<r_{\text{min}}<r_{\text{max}}$. In the ECB scenario mentioned in the introduction, $r_{\text{min}}$ is the deposit facility rate and $r_{\text{max}}$ is the marginal lending facility rate. This definition indicates that the interest rate offered by the borrower $j$ varies between $r_{\text{min}}$ (when the supply $j$ receives is sufficient to cover its demand) and $r_{\text{max}}$ (when $j$ receives no supply), if it is not oversupplied (i.e., $\sum_{i\in L}s_{ij}\leq d_{j}$). If $j$ is oversupplied, the interest rate offered will be lower than $r_{\text{min}}$. For each lender $i\in L$, the \textit{utility function} is the mapping from a strategy profile $\boldsymbol{s}$ to $\mathbb{R}$ given by \begin{align}\label{equation:utility_function}
    u_{i}(\boldsymbol{s})=\sum_{j\in B}(r_{j}(\boldsymbol{s})-r_{\text{min}})s_{ij},
\end{align}
representing the sum of profits from interest payments received by $i$ from all borrowers in excess to those from the deposit facility of the central bank.\footnote{An alternative natural way to define the utility of a lender $i$ would be as the sum of interest received from the borrowers and central bank combined, i.e., $u_{i}(\boldsymbol{s})=\sum_{j\in B}r_{j}(\boldsymbol{s})s_{ij}+r_{\text{min}}(c_{i}-\sum_{j\in B}s_{ij})$. The latter yields the same strategic behaviour by the lenders as the latter definition is equivalent to adding the constant term $r_{\text{min}}c_{i}$ to $i$'s utility function.} Importantly, if the interest rate offered by a particular borrower $j\in B$ is lower than $r_{\text{min}}$ due to oversupply, lenders will receive a negative utility, and will prefer to deposit the oversupply in the central bank deposits for a return of $r_{\text{min}}$. Hence, lenders will always deviate from strategy profiles that oversupply any given borrower in the game, and in particular, such strategy profiles are never Nash equilibria.

\subsection{Pure Nash Equilibria and Potential Games}
Nash equilibria form a central solution concept in non-cooperative games: They are strategy profiles where no player can improve its utility by unilaterally changing its strategy, given that the other players remain playing their current strategies. Formally, a strategy profile $\boldsymbol{s}^{\ast}=(s_{i}^{\ast},\boldsymbol{s}_{-i}^{\ast})\in\boldsymbol{S}$ is a \textit{pure Nash equilibrium} if $u_{i}(s_{i}^{\ast},\boldsymbol{s}_{-i}^{\ast})\geq u_{i}(s_{i},\boldsymbol{s}_{-i}^{\ast}) \ \forall\boldsymbol{s}\in\boldsymbol{S},i\in L$. 

An \textit{exact potential game} is a game for which there exists a function $\Phi:\boldsymbol{S}\to\mathbb{R}$, called the \textit{potential function} of the game, for which it holds that 
\begin{align}\label{equation:potential_function_proof}
     \Phi(s_{k}^{\prime},\boldsymbol{s}_{-k})-\Phi(s_{k},\boldsymbol{s}_{-k})=u_{k}(s_{k}^{\prime},\boldsymbol{s}_{-k})-u_{k}(s_{k},\boldsymbol{s}_{-k})
\end{align}
for each player $k$, any two strategies $s_{k},s_{k}^{\prime}\in S_{k}$ and each strategy profile $\boldsymbol{s}_{-k}\in\boldsymbol{S}_{-k}$. Potential games are guaranteed to have pure Nash equilibria under the condition that the potential function has a maximum: Each pure equilibrium of a potential game corresponds to a local maximum of its potential function \cite{monderer1996potential}.

In this work, we show that lending games are potential games. To that end, we define the following notation. For a subset $[z]=\{1,\dots,z\}$ of the first $z$ lenders in $L$, let
\begin{align}\label{equation:interest_rate_first_z}
    r_{j}(\boldsymbol{s},z)=(r_{\text{min}}-r_{\text{max}})\frac{\sum_{i\in[z]}s_{ij}}{d_{j}}+r_{\text{max}}.
\end{align}
Using this notation, for a given strategy profile $\boldsymbol{s}$, we then define the function $\Phi(\boldsymbol{s}):\boldsymbol{S}\to\mathbb{R}$ for $G$ as
\begin{align}\label{equation:potential_function}
    \Phi(\boldsymbol{s})=\sum_{j\in B}\sum_{i\in L}(r_{j}(\boldsymbol{s},i)-r_{\text{min}})s_{ij}.
\end{align}
In the following theorem, we show that this function is a potential function for $G$, which yields that $G$ is an exact potential game.

\begin{theorem}\label{theorem:potential_game}
    The game $G$ is an exact potential game with the potential function $\Phi$.
\end{theorem}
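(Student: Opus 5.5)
Write $\alpha = \frac{r_{\text{max}}-r_{\text{min}}}{1}>0$, so that $r_{j}(\boldsymbol{s},z)-r_{\text{min}} = \alpha\bigl(1-\frac{1}{d_{j}}\sum_{i\in[z]}s_{ij}\bigr)$. The potential then splits borrower by borrower:
\begin{align*}
\Phi(\boldsymbol{s})=\alpha\sum_{j\in B}\Bigl(\sum_{i\in L}s_{ij}-\frac{1}{d_{j}}\sum_{i\in L}\sum_{i'\leq i}s_{i'j}s_{ij}\Bigr).
\end{align*}
The double sum $\sum_{i}\sum_{i'\le i}s_{i'j}s_{ij}$ runs over ordered pairs $i'\leq i$. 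It therefore equals $\frac12\bigl[(\sum_{i}s_{ij})^{2}+\sum_{i}s_{ij}^{2}\bigr]$, which is symmetric in the lenders. The apparent dependence of $\Phi$ on the ordering of $L$ thus disappears. We get
\begin{align*}
\Phi(\boldsymbol{s})=\alpha\sum_{j\in B}\Bigl(T_{j}-\frac{T_{j}^{2}+\sum_{i\in L}s_{ij}^{2}}{2d_{j}}\Bigr),\qquad T_{j}=\sum_{i\in L}s_{ij}.
\end{align*}
This symmetrisation is the key step, and the only place where some care is needed. Once it is done, the rest is bookkeeping.

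Next, fix a lender $k$ and $\boldsymbol{s}_{-k}$, and let $R_{j}=\sum_{i\neq k}s_{ij}$, so that $T_{j}=R_{j}+s_{kj}$. Both $u_{k}$ and $\Phi$ are sums over $j$, so it suffices to compare the $j$-th terms as functions of $x=s_{kj}$. For the utility, $u_{k}$ has $j$-term
\begin{align*}
\alpha\Bigl(x-\frac{(R_{j}+x)x}{d_{j}}\Bigr)=\alpha\Bigl(x-\frac{R_{j}x+x^{2}}{d_{j}}\Bigr).
\end{align*}
For the potential, the $x$-dependent part of the $j$-term of $\Phi$ is
\begin{align*}
\alpha\Bigl(x-\frac{2R_{j}x+x^{2}+x^{2}}{2d_{j}}\Bigr)=\alpha\Bigl(x-\frac{R_{j}x+x^{2}}{d_{j}}\Bigr).
\end{align*}
Every other term of $\Phi$ depends only on $\boldsymbol{s}_{-k}$. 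Hence $\Phi(\boldsymbol{s})-u_{k}(\boldsymbol{s})$ is a function of $\boldsymbol{s}_{-k}$ alone.

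Finally, since $\Phi-u_{k}$ does not depend on $s_{k}$, evaluating it at $(s_{k}',\boldsymbol{s}_{-k})$ and at $(s_{k},\boldsymbol{s}_{-k})$ gives the same value. Subtracting these two evaluations yields exactly (\ref{equation:potential_function_proof}) for every $k$, every pair $s_{k},s_{k}'\in S_{k}$, and every $\boldsymbol{s}_{-k}$. This proves that $G$ is an exact potential game with potential $\Phi$.

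The main obstacle is the symmetrisation in the first step. If one instead works directly with the ordered form (\ref{equation:potential_function}), the change in $\Phi$ must be tracked in two places. One is the term $i=k$. The other is the cross terms $s_{kj}s_{ij}$ for $i>k$, which enter through $r_{j}(\boldsymbol{s},i)$. These two contributions must then be checked to sum to $-\frac{1}{d_{j}}R_{j}x$. The symmetric rewriting avoids that case split.
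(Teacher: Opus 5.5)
Your argument is correct, but it takes a different route from the paper's. The paper's proof (Appendix A) never symmetrises $\Phi$. It expands $\Phi(s_k',\boldsymbol{s}_{-k})$ and $\Phi(s_k,\boldsymbol{s}_{-k})$ directly in the ordered form (\ref{equation:potential_function}), using the indicator $\boldsymbol{1}[k\le i]$ to track where $s_{kj}$ enters $r_j(\boldsymbol{s},i)$ for $i\neq k$. It then merges the cross terms from lenders $i>k$ with the prefix sum over $[k-1]$ into $\sum_{i\neq k}s_{ij}$ to recover the utility difference. That is exactly the two-place bookkeeping your last paragraph describes and avoids.

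You instead first establish the symmetric quadratic form, which is precisely (\ref{equation:quadratic_function}), and then use the criterion that $\Phi-u_k$ does not depend on $s_k$. Your route gains three things:
\begin{itemize}
\item It proves (\ref{equation:quadratic_function}). The paper only asserts this follows ``by rearranging the terms'', yet relies on it for the KKT stationarity conditions and the Hessian in Section 4.
\item It makes strict concavity immediate. The same ordered-pairs identity is what the paper re-derives, for differences, at the end of its proof of Lemma \ref{lemma:strict_concavity}.
\item It shows $\Phi$ is invariant under relabelling the lenders, so the later ``without loss of generality'' sorting by budget is visibly harmless.
\end{itemize}
The paper's route, in exchange, is a self-contained check from the definition that needs no auxiliary identity and stays close to the sequential prefix-sum construction of $\Phi$. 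The cost is heavier index bookkeeping.
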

A proof of this theorem is provided in Appendix \ref{apx:theorem:potential_game}.

\section{Equilibrium Analysis}\label{section:equilibrium}
In this section, we investigate whether a lending game $G$ has a pure Nash equilibrium, and what this equilibrium looks like. Since by Theorem \ref{theorem:potential_game}, game $G$ is an exact potential game, with potential function $\Phi$, this task boils down to studying the local maxima of $\Phi$.

\subsection{Uniqueness}
First of all, in this subsection, we prove that any lending game $G$ has a unique pure Nash equilibrium.

\begin{lemma}\label{lemma:existence}
    The potential function $\Phi$ has at least one maximiser over the strategy space $\boldsymbol{S}$.
\end{lemma}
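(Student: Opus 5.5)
The plan is to apply the extreme value theorem (Weierstrass): a continuous real-valued function on a nonempty compact subset of Euclidean space attains its maximum. So the proof reduces to two checks, one on the domain $\boldsymbol{S}$ and one on the function $\Phi$.

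\textbf{The domain.} For each lender $i\in L$, the set $S_{i}$ in (\ref{equation:strategy_set}) is the intersection of the closed half-spaces $\{s_{ij}\geq 0\}$, $j\in B$, with the closed half-space $\{\sum_{j\in B}s_{ij}\leq c_{i}\}$. It is therefore closed. It is also bounded, since every coordinate satisfies $0\leq s_{ij}\leq c_{i}$. It is nonempty, because the zero vector lies in it. By Heine--Borel, $S_{i}$ is compact, and so $\boldsymbol{S}=\bigtimes_{i\in L}S_{i}\subseteq\mathbb{R}^{mn}$ is a nonempty compact set, as a finite product of compact sets (Tychonoff in the finite case).

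\textbf{The function.} From (\ref{equation:interest_rate_first_z}), each $r_{j}(\boldsymbol{s},i)$ is an affine function of the variables $s_{kj}$. Hence each summand $(r_{j}(\boldsymbol{s},i)-r_{\text{min}})s_{ij}$ in (\ref{equation:potential_function}) is a polynomial of degree at most two in the entries of $\boldsymbol{s}$. Since $\Phi$ is a finite sum of such terms, it is a quadratic polynomial on $\mathbb{R}^{mn}$, and in particular it is continuous on $\boldsymbol{S}$.

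Combining the two checks, Weierstrass gives a point $\boldsymbol{s}^{\ast}\in\boldsymbol{S}$ with $\Phi(\boldsymbol{s}^{\ast})\geq\Phi(\boldsymbol{s})$ for all $\boldsymbol{s}\in\boldsymbol{S}$, which is the claim. There is no real obstacle here; the only point needing care is confirming that the budget constraint makes each $S_{i}$ bounded even though strategies are real-valued. Uniqueness is a separate matter and would come later, from strict concavity of $\Phi$.
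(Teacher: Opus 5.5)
Your proof is correct and follows the same route as the paper: $\Phi$ is continuous (being a quadratic polynomial), $\boldsymbol{S}$ is compact as a closed and bounded set, and the extreme value theorem gives a maximiser. Your explicit check that $\boldsymbol{S}$ is nonempty, since it contains the zero profile, is a small detail the paper leaves implicit.
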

\begin{proof} 
    It follows directly from (\ref{equation:interest_rate_first_z}) and (\ref{equation:potential_function}) that $\Phi$ is continuous. In addition, according to the definitions of $S_{i}$ (see (\ref{equation:strategy_set})), the strategy space $\boldsymbol{S}$ (i.e., the domain of $\Phi$) is clearly closed and bounded, and therefore compact. Thus, the extreme value theorem indicates that $\Phi$ has a maximum and at least one maximiser.
\end{proof}

\begin{lemma}\label{lemma:strict_concavity}
    The potential function $\Phi$ is strictly concave over $\boldsymbol{S}$.
\end{lemma}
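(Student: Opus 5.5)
The plan is to rewrite $\Phi$ in closed form as a linear function minus a positive definite quadratic form in the variables $(s_{ij})_{i\in L,j\in B}$. Strict concavity over the convex set $\boldsymbol{S}$ then follows at once. Throughout, write $\delta=r_{\text{max}}-r_{\text{min}}>0$ and $T_{j}=\sum_{i\in L}s_{ij}$ for the total supply received by borrower $j$.

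\textbf{Step 1 (closed form).} From (\ref{equation:interest_rate_first_z}),
\[
r_{j}(\boldsymbol{s},i)-r_{\text{min}}=\delta\Bigl(1-\tfrac{1}{d_{j}}\textstyle\sum_{k\le i}s_{kj}\Bigr).
\]
Substituting into (\ref{equation:potential_function}) gives
\[
\Phi(\boldsymbol{s})=\sum_{j\in B}\delta\Bigl(T_{j}-\tfrac{1}{d_{j}}\textstyle\sum_{i\in L}\sum_{k\le i}s_{kj}s_{ij}\Bigr).
\]
The key algebraic step is the symmetrisation identity
\[
\sum_{i}\sum_{k\le i}s_{kj}s_{ij}=\tfrac12\Bigl(T_{j}^{2}+\sum_{i}s_{ij}^{2}\Bigr).
\]
It holds because $T_{j}^{2}$ counts each off-diagonal product $s_{kj}s_{ij}$ with $k\neq i$ twice and each square once. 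Hence
\[
\Phi(\boldsymbol{s})=\sum_{j\in B}\delta\Bigl(T_{j}-\tfrac{1}{2d_{j}}\Bigl(T_{j}^{2}+\textstyle\sum_{i\in L}s_{ij}^{2}\Bigr)\Bigr).
\]
This also shows that $\Phi$ does not depend on the ordering of the lenders.

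\textbf{Step 2 (Hessian).} In this form, $\Phi$ is a sum over $j$ of functions of the disjoint variable blocks $(s_{1j},\dots,s_{mj})$. The Hessian is therefore block diagonal with blocks $-\frac{\delta}{d_{j}}(J_{m}+I_{m})$, where $J_{m}$ is the $m\times m$ all-ones matrix. Since $J_{m}=\mathbf{1}\mathbf{1}^{\top}$ is positive semidefinite and $I_{m}$ is positive definite, every block is negative definite, with eigenvalues $-\frac{\delta}{d_{j}}$ and $-\frac{\delta}{d_{j}}(m+1)$. Hence the whole Hessian is negative definite, with constant entries since $\Phi$ is quadratic.

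Equivalently, and avoiding second derivatives, for $\boldsymbol{s}\neq\boldsymbol{s}'$ and $\lambda\in(0,1)$ one can expand directly. The quadratic part $Q(\boldsymbol{s})=\sum_{j}\frac{\delta}{2d_{j}}\bigl(T_{j}^{2}+\sum_{i}s_{ij}^{2}\bigr)$ satisfies
\[
\lambda Q(\boldsymbol{s})+(1-\lambda)Q(\boldsymbol{s}')-Q(\lambda\boldsymbol{s}+(1-\lambda)\boldsymbol{s}')=\lambda(1-\lambda)Q(\boldsymbol{s}-\boldsymbol{s}')>0.
\]
The inequality holds because $Q(\boldsymbol{x})\ge\sum_{j}\frac{\delta}{2d_{j}}\sum_{i}x_{ij}^{2}>0$ for $\boldsymbol{x}\neq0$. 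The linear part contributes nothing to this difference, so $\Phi=\text{linear}-Q$ is strictly concave.

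\textbf{Step 3 (domain).} $\boldsymbol{S}$ is a product of simplices-with-slack, hence convex. A function that is strictly concave on all of $\mathbb{R}^{mn}$ is strictly concave on $\boldsymbol{S}$, so the lemma follows.

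The main obstacle is Step 1, specifically recognising and verifying the symmetrisation identity that turns the order-dependent triangular sum into the symmetric form. Once that is in hand, Steps 2 and 3 are routine.
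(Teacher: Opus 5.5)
Your proof is correct and uses the same two ingredients as the paper's proof. The first is the symmetrisation $\sum_i x_{ij}\sum_{\ell\le i}x_{\ell j}=\tfrac12\bigl(\sum_i x_{ij}^2+(\sum_i x_{ij})^2\bigr)$; the second is the identity that the concavity gap of a quadratic equals $\lambda(1-\lambda)$ times the quadratic form evaluated at $\boldsymbol{s}-\boldsymbol{s}'$. The only difference is the order: the paper expands the gap $\lambda\Phi(\boldsymbol{s})+(1-\lambda)\Phi(\boldsymbol{s}')-\Phi(\boldsymbol{s}^\lambda)$ in the triangular form and symmetrises at the end, whereas you first symmetrise to the closed form (\ref{equation:quadratic_function}) and then conclude via the Hessian or the quadratic-form identity.
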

A proof of this lemma is provided in Appendix \ref{apx:lemma:strict_concavity}.

\begin{theorem}\label{theorem:uniqueness}
    The game $G$ has a unique pure Nash equilibrium.
\end{theorem}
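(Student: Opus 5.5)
Existence comes from Lemma \ref{lemma:existence}: $\Phi$ attains a maximum at some $\boldsymbol{s}^{\ast}\in\boldsymbol{S}$. By Theorem \ref{theorem:potential_game}, for any lender $i$ and any $s_i\in S_i$,
\begin{align*}
u_i(s_i,\boldsymbol{s}^{\ast}_{-i})-u_i(s^{\ast}_i,\boldsymbol{s}^{\ast}_{-i})=\Phi(s_i,\boldsymbol{s}^{\ast}_{-i})-\Phi(\boldsymbol{s}^{\ast})\leq 0,
\end{align*}
since $(s_i,\boldsymbol{s}^{\ast}_{-i})\in\boldsymbol{S}$. Hence $\boldsymbol{s}^{\ast}$ is a pure Nash equilibrium.

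For uniqueness I will show that every pure Nash equilibrium is a global maximiser of $\Phi$. Strict concavity (Lemma \ref{lemma:strict_concavity}) then gives that the maximiser is unique: if $\boldsymbol{s}\neq\boldsymbol{s}'$ both attained the maximum, their midpoint lies in the convex set $\boldsymbol{S}$ and would have strictly larger potential.

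Let $\boldsymbol{s}$ be a pure Nash equilibrium. By the exact potential property, $s_i$ maximises the map $\Phi(\cdot,\boldsymbol{s}_{-i})$ over $S_i$ for every $i$. This map is concave and differentiable, since $\Phi$ is a quadratic polynomial. The first-order optimality condition on the convex set $S_i$ therefore gives
\begin{align*}
\nabla_{s_i}\Phi(\boldsymbol{s})\cdot(s_i'-s_i)\leq 0\quad\text{for all } s_i'\in S_i.
\end{align*}
Because $\boldsymbol{S}=\bigtimes_i S_i$ is a Cartesian product, summing these inequalities over $i$ yields $\nabla\Phi(\boldsymbol{s})\cdot(\boldsymbol{s}'-\boldsymbol{s})\leq 0$ for all $\boldsymbol{s}'\in\boldsymbol{S}$. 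For a concave differentiable function, this variational inequality is exactly the condition for a global maximiser over a convex set, since $\Phi(\boldsymbol{s}')\leq\Phi(\boldsymbol{s})+\nabla\Phi(\boldsymbol{s})\cdot(\boldsymbol{s}'-\boldsymbol{s})\leq\Phi(\boldsymbol{s})$. So every equilibrium equals the unique maximiser $\boldsymbol{s}^{\ast}$.

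The main obstacle is this last step: passing from ``unilaterally optimal in each block'' to ``jointly optimal''. It relies on the product structure of $\boldsymbol{S}$ together with differentiability and concavity of $\Phi$. A joint concave function need not have this property without smoothness, so I would state the first-order argument explicitly rather than appeal to a general theorem.
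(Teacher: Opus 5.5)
Your proof is correct, and its skeleton matches the paper's. Existence of a maximiser comes from compactness (Lemma \ref{lemma:existence}), there is at most one maximiser by strict concavity (Lemma \ref{lemma:strict_concavity}), and the exact-potential property links maximisers of $\Phi$ to equilibria.

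The difference is in the direction ``every pure Nash equilibrium is a global maximiser of $\Phi$''. The paper handles it in one sentence by citing Monderer and Shapley. What that result gives is only that equilibria are the profiles where $\Phi$ cannot be increased by a single lender's unilateral change. With continuous strategy sets, such coordinatewise maximisers need not be global, or even local, maximisers in general. So the citation alone does not rule out a second equilibrium.

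Your variational-inequality argument supplies exactly this missing step:
\begin{itemize}
\item the first-order condition $\nabla_{s_i}\Phi(\boldsymbol{s})\cdot(s_i'-s_i)\leq 0$ in each block $S_i$;
\item summing these over $i$, using the product structure of $\boldsymbol{S}$;
\item the gradient inequality for the concave, differentiable $\Phi$.
\end{itemize}
This is essentially the pure-strategy case of Neyman's result on smooth concave potentials, which the paper mentions only in its related work. Your remark that smoothness is genuinely needed is also right.

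The paper's version is shorter, but it leans on a citation that does not literally cover the claim. Yours is self-contained and makes explicit which structural features the uniqueness rests on: differentiability and concavity of $\Phi$, together with the product form of the strategy space.
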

\begin{proof}
    By Theorem \ref{theorem:potential_game}, the game $G$ is an exact potential game with the potential function $\Phi$. In addition, $\Phi$ has at least one maximiser by Lemma \ref{lemma:existence}, while Lemma \ref{lemma:strict_concavity} proves that $\Phi$ is strictly concave, which implies that $\Phi$ has at most one maximiser. Therefore, $\Phi$ has a unique maximiser over $\boldsymbol{S}$, corresponding to its global maximum. Furthermore, since each pure Nash equilibrium corresponds to a maximum of the potential function in potential games \cite{monderer1996potential}, the unique maximiser of $\Phi$ is exactly the unique pure Nash equilibrium of $G$.
\end{proof}

\subsection{Characterising the Equilibrium}
Since each pure Nash equilibrium corresponds to a local maximum of the potential function in potential games, given that the game $G$ has a unique pure Nash equilibrium, this unique Nash equilibrium corresponds to the global maximum of the potential function $\Phi$. We can find this global maximum by solving the following constrained optimisation problem:
\begin{align}\label{equation:optimisation_problem}
    \begin{array}{rll}
        \max & \Phi(\boldsymbol{s}) & \\
        \text{s.t.} & \sum_{j\in B}s_{ij}\leq c_{i} & \quad \forall i\in L, \\
        & s_{ij}\geq 0 & \quad \forall i\in L,j\in B. \\
    \end{array}
\end{align}
Note that the constraints are all linear and simply express that $\boldsymbol{s}\in\boldsymbol{S}$.
We can characterise the optimal solution through analysing the Karush-Kuhn-Tucker (KKT) conditions. To that end, we analyse the following Lagrangian function:
\begin{align*}
    \mathcal{L}(\boldsymbol{s},\boldsymbol{\mu})=\Phi(\boldsymbol{s})+\sum_{i\in L}\mu_{i}\left(c_{i}-\sum_{j\in B}s_{ij}\right)+\sum_{i\in L}\sum_{j\in B}\mu_{ij}s_{ij}
\end{align*}
where $\boldsymbol{\mu}=(\mu_{1},\dots,\mu_{m},\mu_{11},\dots,\mu_{mn})$ is a vector of multipliers where all elements are non-negative and there are $m+mn$ elements in total. Then, the KKT conditions can be stated as follows.
\begin{enumerate}
    \item Primal feasibility:
    \begin{align*}
        \begin{cases}
            \sum_{j\in B}s_{ij}\leq c_{i} & \quad \forall i\in L, \\
            s_{ij}\geq0 & \quad \forall i\in L,j\in B. \\
        \end{cases}
    \end{align*}
    \item Stationarity: For any $i\in L,j\in B$,
    \begin{align*}
        \frac{\partial\mathcal{L}}{\partial s_{ij}}(\boldsymbol{s})=(r_{\text{min}}-r_{\text{max}})\left(\frac{1}{d_{j}}\left(s_{ij}+\sum_{k\in L}s_{kj}\right)-1\right)-\mu_{i}+\mu_{ij}=0
    \end{align*}
    where we note that the partial derivatives are straightforward to derive from the following reformulation of $\Phi$, which can be obtained from (\ref{equation:interest_rate_first_z}) and (\ref{equation:potential_function}) by rearranging the terms in the expression:
    \begin{align}\label{equation:quadratic_function}
        \resizebox{\linewidth}{!}{$
            \Phi(\boldsymbol{s})=\displaystyle\sum_{j\in B}\left(\dfrac{r_{\text{min}}-r_{\text{max}}}{2d_{j}}\left(\sum_{i\in L}s_{ij}^2+\left(\sum_{i\in L}s_{ij}\right)^2\right)+(r_{\text{max}}-r_{\text{min}})\displaystyle\sum_{i\in L}s_{ij}\right).
        $}
    \end{align}
    \item Dual feasibility:
    \begin{align*}
        \begin{cases}
            \mu_{i}\geq0 & \quad \forall i\in L, \\
            \mu_{ij}\geq0 & \quad \forall i\in L,j\in B. \\
        \end{cases}
    \end{align*}
    \item Complementary slackness:
    \begin{align*}
        \begin{cases}
            \mu_{i}(c_{i}-\sum_{j\in B}s_{ij})=0 & \quad \forall i\in L, \\
            \mu_{ij}s_{ij}=0 & \quad \forall i\in L,j\in B. \\
        \end{cases}
    \end{align*}
\end{enumerate}
Note that the objective function of the optimisation problem (i.e., the potential function $\Phi$) is differentiable and strictly concave. And the inequality constraints in (\ref{equation:optimisation_problem}) are linear, and therefore differentiable and convex. As a result, a point that satisfies the above KKT conditions is an optimal solution to (\ref{equation:optimisation_problem}).

Now we give the solution to the optimisation problem. Assume without loss of generality that the lenders are ordered increasingly according to their budget, i.e., $c_{i}\leq c_{j} \ \forall i,j\in L,i<j$. Let $c_{m+1}=\infty$ and define $\bar{m}$ as the least index in $[m]\cup\{0\}$ for which it holds that
\begin{align}\label{equation:bar_m}
    c_{\bar{m}+1}>\frac{1}{m-\bar{m}+1}\left(\sum_{j\in B}d_{j}-\sum_{\ell\in[\bar{m}]}c_{\ell}\right).
\end{align}
Note that $\bar{m}$ is well-defined because we define $c_{m+1}=\infty$. That is, $\bar{m}$ is the lowest number such that the $m-\bar{m}$ lenders with the highest budgets have budget that exceeds a $1/(m-\bar{m}+1)$ fraction of the remaining demands of the borrowers, after the $\bar{m}$ lenders with the lowest budgets each allocate their entire budget to the borrowers.

Let $\bar{L}=[\bar{m}]$ (so that $|\bar{L}|=\bar{m}$). We now define the multipliers $\boldsymbol{\mu}$ and strategy profile $s^{\ast}$ satisfying all the KKT conditions, where we make a case distinction between the low-budget lenders (i.e. those in $\bar{L}$) and the high-budget lenders (i.e., those in $L\setminus\bar{L}$). It will be clear from the definition of $\boldsymbol{s}^{\ast}$ below that the lenders in $\bar{L}$ will spend their entire budget, whereas the lenders in $L\setminus\bar{L}$ will not. Consider the following KKT multipliers:
\begin{align}\label{equation:multipliers}
    \resizebox{\linewidth}{!}{$
        \mu_{i}=
        \begin{cases}
            (r_{\text{min}}-r_{\text{max}})\left(\dfrac{c_{i}}{\sum_{k\in B}d_{k}}-\dfrac{1}{m-\bar{m}+1}\left(1-\dfrac{\sum_{\ell\in\bar{L}}c_{\ell}}{\sum_{k\in B}d_{k}}\right)\right) & \text{if } i\in\bar{L}, \\
            0 & \text{if } i\in L\setminus\bar{L}. \\
        \end{cases}
    $}
\end{align}
\begin{align*}
    \mu_{ij}=0 \quad \forall i\in L,j\in B.
\end{align*}
The strategy profile at equilibrium $\boldsymbol{s}^{\ast}$ is defined as follows. For any $i\in L,j\in B$
\begin{align}\label{equation:solution}
    s_{ij}^{\ast}
    \begin{cases}
        \dfrac{c_{i}}{\sum_{k\in B}d_{k}}\cdot d_{j} & \text{if } i\in\bar{L}, \\
        \dfrac{1}{m-\bar{m}+1}\left(1-\dfrac{\sum_{\ell\in\bar{L}}c_{\ell}}{\sum_{k\in B}d_{k}}\right)\cdot d_{j} & \text{if } i\in L\setminus\bar{L}. \\
    \end{cases}
\end{align}
The above solution is the unique pure Nash equilibrium solution to the game $G$. Theorem \ref{theorem:solution} verifies its correctness.

\begin{theorem}\label{theorem:solution}
    The strategy profile $\boldsymbol{s}^{\ast}$ defined by (\ref{equation:solution}), together with the Lagrange multipliers $\boldsymbol{\mu}$ defined by (\ref{equation:multipliers}), satisfy all the KKT conditions. Therefore, $\boldsymbol{s}^{\ast}$ is the unique pure Nash equilibrium of the interbank lending game $G$.
\end{theorem}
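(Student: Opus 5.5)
The plan is to check the four groups of KKT conditions one at a time, with the notation $D=\sum_{k\in B}d_{k}$, $\bar{C}=\sum_{\ell\in\bar{L}}c_{\ell}$ and $\alpha=\frac{1}{m-\bar{m}+1}\bigl(1-\frac{\bar{C}}{D}\bigr)$. With this notation, $s^{\ast}_{ij}=\frac{c_{i}}{D}d_{j}$ for $i\in\bar{L}$ and $s^{\ast}_{ij}=\alpha d_{j}$ for $i\in L\setminus\bar{L}$. Once the conditions hold, optimality follows from the remark before the statement: $\Phi$ is differentiable and strictly concave and the constraints are linear, so the KKT conditions are sufficient. Theorem \ref{theorem:uniqueness} (through the correspondence between maximisers of $\Phi$ and pure Nash equilibria) then says $\boldsymbol{s}^{\ast}$ is the unique pure Nash equilibrium.

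\emph{Stationarity and complementary slackness.} First compute the total supply to borrower $j$:
\[
T_{j}=\sum_{k\in L}s^{\ast}_{kj}=d_{j}\Bigl(\frac{\bar{C}}{D}+(m-\bar{m})\alpha\Bigr).
\]
Since $\frac{\bar{C}}{D}+(m-\bar{m}+1)\alpha=1$ by the definition of $\alpha$, this gives $T_{j}/d_{j}-1=-\alpha$. The stationarity expression therefore reduces, with $\mu_{ij}=0$, to
\[
(r_{\text{min}}-r_{\text{max}})\Bigl(\frac{s^{\ast}_{ij}}{d_{j}}-\alpha\Bigr)-\mu_{i}.
\]
For $i\in L\setminus\bar{L}$ the bracket is $\alpha-\alpha=0$ and $\mu_{i}=0$, so the expression vanishes. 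For $i\in\bar{L}$ the bracket is $\frac{c_{i}}{D}-\alpha$, which is exactly the bracket in (\ref{equation:multipliers}), so the expression vanishes again.

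For complementary slackness, note that each $i\in\bar{L}$ spends $\sum_{j}s^{\ast}_{ij}=c_{i}$, so its budget constraint is tight and $\mu_{i}$ may be nonzero. Every other multiplier is zero, so complementary slackness holds trivially.

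\emph{Primal and dual feasibility.} These need real inequalities, and I expect them to be the main obstacle because they depend on the choice of $\bar{m}$. Let $S_{k}=\sum_{\ell\le k}c_{\ell}$ and $a_{k}=\frac{D-S_{k}}{m-k+1}$, so that $\alpha D=a_{\bar{m}}$. The conditions to verify are:
\begin{enumerate}
\item $\alpha\ge 0$, which gives $s^{\ast}_{ij}\ge 0$;
\item $\alpha D\le c_{i}$ for $i>\bar{m}$, which is the budget constraint of the high-budget lenders;
\item $c_{i}\le \alpha D$ for $i\le\bar{m}$, which gives $\mu_{i}\ge 0$ because $r_{\text{min}}-r_{\text{max}}<0$.
\end{enumerate}

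The key observation uses the minimality of $\bar{m}$. For every $k<\bar{m}$, inequality (\ref{equation:bar_m}) fails, so $c_{k+1}\le a_{k}$. Then
\[
D-S_{k+1}=(D-S_{k})-c_{k+1}\ge (D-S_{k})\frac{m-k}{m-k+1},
\]
which rearranges to $a_{k+1}\ge a_{k}$. Hence $a_{0}\le a_{1}\le\dots\le a_{\bar{m}}$. I will check separately that the boundary case $k=m-1$ (denominator $1$) behaves the same way.

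The three conditions then follow:
\begin{enumerate}
\item $\alpha D=a_{\bar{m}}\ge a_{0}=D/(m+1)>0$, which proves nonnegativity.
\item For $i\le\bar{m}$, $c_{i}\le a_{i-1}\le a_{\bar{m}}=\alpha D$, which proves dual feasibility.
\item For $i>\bar{m}$, (\ref{equation:bar_m}) itself gives $c_{\bar{m}+1}>a_{\bar{m}}$. The lenders are sorted by budget, so $c_{i}\ge c_{\bar{m}+1}>\alpha D$, which proves primal feasibility. If $\bar{m}=m$ there are no such lenders and nothing needs checking.
\end{enumerate}
Together these give all the KKT conditions, and the conclusion follows as described at the start.
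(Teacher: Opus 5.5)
Your proposal is correct and takes essentially the same route as the paper. Like the paper, you check the four KKT groups directly: you use the same total-supply-per-borrower computation for stationarity, and the same combination of the definition of $\bar m$ with the budget ordering for the high-budget lenders' constraints. The one variation is in dual feasibility. The paper uses a single-step contradiction at index $\bar m-1$ (giving $c_{\bar m}\le a_{\bar m}$ in your notation) followed by sorting, whereas your monotone chain $a_0\le a_1\le\dots\le a_{\bar m}$ also shows explicitly that $\alpha>0$, i.e.\ $\boldsymbol{s}^{\ast}\ge 0$, which the paper leaves implicit. Your separate check of the case $k=m-1$ is unnecessary, since $m-k\ge 1$ for every $k<\bar m$.
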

A proof of this theorem is provided in Appendix \ref{apx:theorem:solution}.

\begin{theorem}
    The interest rates offered by borrowing banks are identical at pure Nash equilibrium $\boldsymbol{s}^{\ast}$.
\end{theorem}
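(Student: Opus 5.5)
The plan is a direct computation from the explicit form of the equilibrium in (\ref{equation:solution}), which Theorem \ref{theorem:solution} identifies as the unique pure Nash equilibrium. Every lender's allocation in (\ref{equation:solution}) has the form $s_{ij}^{\ast}=\alpha_{i}d_{j}$, where the coefficient $\alpha_{i}$ depends only on the lender $i$ and not on the borrower $j$. Concretely,
\begin{align*}
\alpha_{i}=\frac{c_{i}}{\sum_{k\in B}d_{k}} \quad\text{if } i\in\bar{L},
\qquad
\alpha_{i}=\frac{1}{m-\bar{m}+1}\left(1-\frac{\sum_{\ell\in\bar{L}}c_{\ell}}{\sum_{k\in B}d_{k}}\right) \quad\text{if } i\in L\setminus\bar{L}.
\end{align*}
The first step is to record this factorisation.

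Summing over lenders then gives the total supply to borrower $j$ as $\sum_{i\in L}s_{ij}^{\ast}=d_{j}\sum_{i\in L}\alpha_{i}$. So the supply-to-demand ratio $\sum_{i\in L}s_{ij}^{\ast}/d_{j}=A:=\sum_{i\in L}\alpha_{i}$ is a constant that does not depend on $j$.

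Substituting into (\ref{equation:interest_rate}) yields $r_{j}(\boldsymbol{s}^{\ast})=(r_{\text{min}}-r_{\text{max}})A+r_{\text{max}}$ for every $j\in B$. This proves that all borrowers offer the same interest rate.

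There is no real obstacle here. The only point needing care is that the coefficient for the high-budget lenders in $L\setminus\bar{L}$ must also be free of $j$, which is immediate because it involves only the aggregates $\sum_{k\in B}d_{k}$, $\sum_{\ell\in\bar{L}}c_{\ell}$, $m$ and $\bar{m}$. As an optional remark, one could simplify $A$ to $\frac{m-\bar{m}}{m-\bar{m}+1}+\frac{1}{m-\bar{m}+1}\cdot\frac{\sum_{\ell\in\bar{L}}c_{\ell}}{\sum_{k\in B}d_{k}}$, which gives the common rate explicitly. This simplification is not needed for the claim.
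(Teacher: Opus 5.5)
Your proposal is correct and follows essentially the same route as the paper: substitute the explicit equilibrium (\ref{equation:solution}) into (\ref{equation:interest_rate}) and observe that the supply-to-demand ratio $\sum_{i\in L}s_{ij}^{\ast}/d_{j}$, and hence the rate, does not depend on $j$. Your factorisation $s_{ij}^{\ast}=\alpha_{i}d_{j}$ just makes this more transparent, and your simplified $A$ yields exactly the common rate the paper writes out explicitly.
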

\begin{proof}
    According to the definition in (\ref{equation:interest_rate}), for any $j\in B$,
    \begin{align*}
        r_{j}(\boldsymbol{s}^{\ast}) & =(r_{\text{min}}-r_{\text{max}})\dfrac{\sum_{i\in L}s_{ij}^{\ast}}{d_{j}}+r_{\text{max}} \\
        & =(r_{\text{min}}-r_{\text{max}})\dfrac{\sum_{\ell\in\bar{L}}s_{\ell j}^{\ast}+\sum_{h\in L\setminus\bar{L}}s_{hj}^{\ast}}{d_{j}}+r_{\text{max}}.   
    \end{align*}
    Through (\ref{equation:solution}), we then obtain that for any $j\in B$,
    \begin{align*}
        \resizebox{\linewidth}{!}{$
            r_{j}(\boldsymbol{s}^{\ast})=\dfrac{r_{\text{min}}}{m-\bar{m}+1}\left(m-\bar{m}+\dfrac{\sum_{\ell\in\bar{L}}c_{\ell}}{\sum_{k\in B}d_{k}}\right)+\dfrac{r_{\text{max}}}{m-\bar{m}+1}\left(1-\dfrac{\sum_{\ell\in\bar{L}}c_{\ell}}{\sum_{k\in B}d_{k}}\right).
        $}
    \end{align*}
    It is clear that this interest rate is independent of $j$'s identity, so it is identical for any $j\in B$, which can be interpreted as a natural \textit{market interest rate} in the interbank money market.
\end{proof}

\subsection{Algorithm}
For computing the unique pure Nash equilibrium of a given lending game, we first note that the optimisation problem defined in (\ref{equation:optimisation_problem}) is an instance of a convex quadratic program, because all constraints are linear, while the objective function is quadratic (see (\ref{equation:quadratic_function})) and strictly concave (by Lemma \ref{lemma:strict_concavity}). Therefore, the optimisation problem can be reformulated as minimising a convex quadratic objective function over a polyhedron, which implies that we can use known weakly polynomial time algorithms (see \cite{kozlov1980polynomial,ye1989extension}) to find the optimal solution.

However, the characterisation in the previous section makes it easier to construct the pure Nash equilibrium of the lending game algorithmically. The natural algorithm that follows from the above characterisation first identifies the least $\bar{m}$ for which (\ref{equation:bar_m}) holds, and then computes $\bar{L}$, i.e., $\bar{L}$ are the $\bar{m}$ lenders with the lowest budgets in $L$. Finally, the algorithm sets and outputs $\boldsymbol{s}^{\ast}$ according to (\ref{equation:solution}). In Algorithm \ref{algorithm:solution}, we present a more detailed specification of the above procedure, noting that Line \ref{line:s_pi} outputs $\boldsymbol{s}^{\ast}$ permuted back to the original indexing of $L$.

\begin{algorithm}
    \caption{Algorithm to find pure Nash equilibrium of $G$.}
    \label{algorithm:solution}
    \begin{algorithmic}[1]
        \REQUIRE Lending game $G = (m,n,\boldsymbol{c},\boldsymbol{d},r_{\text{max}},r_{\text{min}})$
        \ENSURE Unique pure Nash equilibrium $\boldsymbol{s}^{\ast}$
        \STATE Without loss of generality, rename the lenders $L$ such that $c_{1}\leq c_{2}\leq\cdots\leq c_m$ holds. Let $\pi:[m]\to[m]$ be the corresponding permutation of $L$.
        \STATE Let $c_{m+1}:=\infty$, $R:=\sum_{j\in B}d_{j}/(m+1)$, and $i:=0$. \label{line:R}
        \WHILE{$c_{i+1}\leq R$}
            \STATE Set $R:=(R\cdot(m-i+1)-c_{i+1})/(m-i)$.
            \STATE Set $i:=i+1$.
        \ENDWHILE
        \STATE Set $\bar{m}:=i$. \label{line:bar_m}
        \STATE Set $\bar{L}=[\bar{m}]$, and using this choice of $\bar{L}$, set $\boldsymbol{s}^{\ast}$ as in (\ref{equation:solution}). \label{line:s_star}
        \STATE Output $\boldsymbol{s}^{\pi}$ where $s_{ij}^{\pi}=s_{\pi^{-1}({i})j}^{\ast} \ \forall i\in L,j\in B$. \label{line:s_pi}
    \end{algorithmic}
\end{algorithm}

\begin{theorem}\label{theorem:correctness}
    Algorithm \ref{algorithm:solution} correctly outputs the unique pure Nash equilibrium of a given lending game, and has a (strongly polynomial) running time of $\mathcal{O}(mn+m\log m)$.
\end{theorem}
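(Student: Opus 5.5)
The argument has two parts. First I show that the while-loop computes exactly the $\bar{m}$ of (\ref{equation:bar_m}). Correctness then follows from Theorem \ref{theorem:solution}, together with the observation that relabelling the lenders is harmless. Second, I bound the running time by counting operations.

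\textbf{Loop invariant.} Write $D=\sum_{j\in B}d_{j}$. I will prove by induction on $i$ that, each time the loop condition is tested,
\begin{align*}
R=\frac{1}{m-i+1}\Bigl(D-\sum_{\ell\in[i]}c_{\ell}\Bigr).
\end{align*}
\begin{itemize}
\item For $i=0$ this is the initialisation on Line \ref{line:R}.
\item For the inductive step, the update multiplies $R$ by $(m-i+1)$, which recovers $D-\sum_{\ell\in[i]}c_{\ell}$. It then subtracts $c_{i+1}$ and divides by $m-i=m-(i+1)+1$, which is exactly the invariant for $i+1$.
\end{itemize}
Hence the test $c_{i+1}\leq R$ is precisely the negation of (\ref{equation:bar_m}) at index $i$.

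\textbf{Termination and output.} The loop increments $i$ from $0$ and stops at the first $i$ for which (\ref{equation:bar_m}) holds. That index is by definition the least one, namely $\bar{m}$. Termination by $i=m$ at the latest is guaranteed by $c_{m+1}=\infty$. This also shows the loop never reaches $i=m+1$, so the division by $m-i$ never has a zero denominator.

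The step I expect to need the most care is confirming that the sorting permutation is harmless. The game is symmetric under relabelling lenders, since $\Phi$ and the utilities depend on the labels only through the indexing. So the equilibrium $\boldsymbol{s}^{\ast}$ of the sorted game, given by Theorem \ref{theorem:solution}, corresponds under $\pi^{-1}$ to the equilibrium of the original game. This is what Line \ref{line:s_pi} outputs. Uniqueness then follows from Theorem \ref{theorem:uniqueness}.

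\textbf{Running time.}
\begin{itemize}
\item Sorting the budgets costs $\mathcal{O}(m\log m)$ comparisons.
\item Computing $D$ costs $\mathcal{O}(n)$.
\item The loop runs at most $m$ iterations, each with $\mathcal{O}(1)$ arithmetic operations.
\item Computing $\sum_{\ell\in\bar{L}}c_{\ell}$ costs $\mathcal{O}(m)$. The $mn$ entries of (\ref{equation:solution}) then each cost $\mathcal{O}(1)$, since the two per-class coefficients can be precomputed once. Permuting the rows costs another $\mathcal{O}(mn)$.
\end{itemize}
The total is $\mathcal{O}(mn+m\log m)$. The operation count does not depend on the magnitudes of the inputs, and every intermediate number is a rational expression in the input of polynomially bounded bit size. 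The algorithm is therefore strongly polynomial.
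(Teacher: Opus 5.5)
Your proposal is correct and follows essentially the same route as the paper. You prove the loop invariant $R=\frac{1}{m-i+1}\bigl(\sum_{j\in B}d_{j}-\sum_{\ell\in[i]}c_{\ell}\bigr)$ (the paper's $R=\text{RHS}_{i}$), conclude that the loop stops at the least index satisfying (\ref{equation:bar_m}), invoke Theorem \ref{theorem:solution}, and bound the running time by the same operation count; your extra remarks on the relabelling by $\pi$, the nonzero denominator $m-i$, and the bit sizes of intermediate values are sound details the paper leaves implicit.
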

\begin{proof}
    If we assume that $\bar{m}$ computed by Algorithm \ref{algorithm:solution} in Line \ref{line:bar_m} coincides with the definition of $\bar{m}$ given by (\ref{equation:bar_m}), then it follows from Lines \ref{line:s_star} and \ref{line:s_pi} that the strategy profile defined in (\ref{equation:solution}) is output by Algorithm \ref{algorithm:solution}. Correctness of Algorithm \ref{algorithm:solution} then follows immediately from Theorem \ref{theorem:solution}. Therefore, what remains is to prove that indeed $\bar{m}$ is computed correctly: Denote by $\text{RHS}_{i}$ the value on the right-hand side of (\ref{equation:bar_m}) if we would substitute $i$ in the place of $\bar{m}$. We now see that $R$ defined in Line \ref{line:R} coincides with $\text{RHS}_{0}$. For any $i>0$, we observe that $\text{RHS}_{i+1}=(\text{RHS}_{i}\cdot(m-i+1)-c_{i+1})/(m-i)$, which implies that after the $i$-th iteration of the while-loop, it holds that $R=\text{RHS}_{i}$. The while-loop terminates at the smallest $i$ for which it holds that $c_{i+1}>R=\text{RHS}_{i}$, and subsequently $\bar{m}$ is set to $i$, so that after Line \ref{line:bar_m}, it holds that $c_{\bar{m}+1}>\text{RHS}_{\bar{m}}$. We conclude that $\bar{m}$ is computed correctly, and that hence Algorithm \ref{algorithm:solution} outputs the unique pure Nash equilibrium correctly.

    For the running time, the renaming step (or the computation of $\pi$) requires sorting $m$ values, which can be done in $\mathcal{O}(m\log m)$ time. Finding the correct choice of $\bar{m}$ (Lines \ref{line:R} to \ref{line:bar_m}) requires computing $\sum_{j\in B}d_{j}$ which takes $\mathcal{O}(n)$ time, and going through at most $m$ iterations of the while-loop, where each iteration takes constant time.
    Lastly, in Line \ref{line:s_star}, the values of $s_{ij}^{\ast}$ are computed for all $i\in L,j\in B$, according to (\ref{equation:solution}). This can be done efficiently by pre-computing $\sum_{\ell\in \bar{L}}c_{\ell}$ (in $\mathcal{O}(m)$ time) and using the pre-computed value $\sum_{j\in B}d_{j}$. Computing the correct value of $s_{ij}^{\ast}$ given by the expression in (\ref{equation:solution}) then takes constant time for each $i\in L,j\in B$. 
    Thus, Line \ref{line:s_star} takes $\mathcal{O}(mn)$ time, and therefore, the overall time complexity of Algorithm \ref{algorithm:solution} is $\mathcal{O}(mn+m\log m)$.
\end{proof}

\section{Dynamics Analysis}\label{section:dynamics}
A natural approach to modelling players' decision-making process is the best-response dynamics, where players iteratively update their strategies by choosing the best-response to the current strategy profile. We investigate lender behaviour over time in the lending game $G$. Given that $G$ is an exact potential game with a strictly concave potential function $\Phi$, we expect that some variants of the best-response dynamics converge to the pure Nash equilibrium computed in the previous section. We proceed by formally establishing this result in both discrete-time and continuous-time scenarios.

\textbf{Best responses.} Formally, given a strategy profile $\boldsymbol{s}^t\in\boldsymbol{S}$ at time $t$, for each lender $i\in L$, its \textit{best-response strategy set} is defined as
\begin{align*}
    \text{BR}_{i}(\boldsymbol{s}^t)=\arg\max\{u_{i}(z_{i},\boldsymbol{s}_{-i}^t):z_{i}\in S_{i}\}
\end{align*}
where $\boldsymbol{s}_{-i}^t\in\boldsymbol{S}_{-i}$ denotes the current strategies of other lenders at time $t$. In addition, since $G$ is an exact potential game with potential function $\Phi$, we can reformulate this best-response strategy set as
\begin{align*}
    \text{BR}_{i}(\boldsymbol{s}^t)=\arg\max\{\Phi(z_{i},\boldsymbol{s}_{-i}^t):z_{i}\in S_{i}\}.
\end{align*}

\subsection{Discrete-Time Dynamics}
Given a current strategy profile $\boldsymbol{s}^t$, to define discrete-time best-response dynamics, we consider the following strategy update for a lender $i\in L$,
\begin{align*}
    s_i^{t+1}= s_i^t+\alpha_t(\hat{s}_i^t-s_i^t), \quad \hat{s}_i^t\in\text{BR}_{i}(\boldsymbol{s}^t)
\end{align*}
where $\alpha_t\in(0,1]$ represents the step-size of the update at time $t$. Note that this generalises the standard definition of best-response dynamics that is encountered in most literature, where $\alpha_t=1$ for all $t\in\mathbb{N}$. This generalisation makes sense to consider in our context, due to the lending game having a continuous strategy space and the concavity of the utility functions. In this subsection, we investigate the discrete-time best-response dynamics of the lending game $G$ in asynchronous and synchronous update patterns, respectively.

\subsubsection{Asynchronous Updates}
Asynchronous updates allow a lender at a time to respond. Concretely, the trajectory of the \textit{asynchronous best-response dynamics} for the lending game $G$ is a sequence $(\boldsymbol{s}^t)_{t\in\mathbb{N}}$ of strategy profiles, where $\boldsymbol{s}^1$ is the \textit{initial strategy profile}, and for all $t\in\mathbb{N}$, it holds that there is a lender $i^t\in L$ such that $\boldsymbol{s}^{t+1}=(s_{i^t}^{t+1},\boldsymbol{s}_{-i^t}^t)$ and $s_{i^t}^{t+1}=s_{i^t}^t+\alpha_t(\hat{s}_{i^t}^{t}-s_{i^t}^t)$ where $\hat{s}_{i^t}^t\in\text{BR}_{i^t}(\boldsymbol{s}^t)$. We refer to the case where $\alpha_t$ is set to a uniform value $\alpha\in(0,1]$ for all $t\in\mathbb{N}$ as \textit{$\alpha$-uniform best-response dynamics}. We then consider the following two variants of $\alpha$-uniform best-response dynamics.
\begin{itemize}
    \item \textit{Eager best-response dynamics}: In this variant of best-response dynamics, at every time step $t$, the updating lender $i^t$ is defined as the one who updates its strategy to achieve the highest utility increase among all lenders $L$, breaking ties in an arbitrary deterministic way, i.e., $i^t\in\arg\max_{i\in L}\{u_{i}(\hat{s}_i^t,\boldsymbol{s}_{-i}^t):\hat{s}_i^t\in\text{BR}_{i}(\boldsymbol{s}^t)\}$. This can be interpreted as the fact that the lender who is most ``eager'' to improve its utility updates its strategy first. Alternatively, this dynamics can be interpreted from the viewpoint of a centralised coordination which selects the lender $i^t$ whose best-response can yield the highest utility increase at each time step $t$.
    \item \textit{Randomised best-response dynamics}: In this variant of best-response dynamics, the updating lender $i^t$ is chosen at random at every time step $t$, according to a given probability distribution $\pi^t$ on the set of lenders $L$. This probability distribution must satisfy that there exists a constant $p_\text{min}>0$ such that each lender at each time step has a probability of at least $p_\text{min}$ of being selected. For example, taking $\pi^t$ to be the uniform distribution on $L$ satisfies this property, and we would then have $p_\text{min}= 1/m$.
\end{itemize}
\begin{theorem}\label{theorem:eager_alpha_uniform_best_response_dynamics}
    For any lending game $G$, any $\alpha\in(0,1]$, and any initial strategy profile, the eager $\alpha$-uniform best-response dynamics converges to the unique pure Nash equilibrium $\boldsymbol{s}^{\ast}$.
\end{theorem}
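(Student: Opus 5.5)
The plan is to use $\Phi$ as a Lyapunov function. First I show that $\Phi(\boldsymbol{s}^t)$ is nondecreasing and bounded, so it converges. Then I show that every limit point of the trajectory is a fixed point of the best-response map, and hence equals $\boldsymbol{s}^{\ast}$. Compactness of $\boldsymbol{S}$ then gives convergence of the whole sequence.

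\textbf{Monotonicity.} Write $g_i(\boldsymbol{s})=\max_{z_i\in S_i}\Phi(z_i,\boldsymbol{s}_{-i})-\Phi(\boldsymbol{s})\geq 0$ for the best-response gain of lender $i$; by Theorem \ref{theorem:potential_game} it equals $i$'s utility gain. Since $u_i$ is concave in $s_i$, concavity along the segment gives
\[
\Phi(\boldsymbol{s}^{t+1})-\Phi(\boldsymbol{s}^t)\geq \alpha\, g_{i^t}(\boldsymbol{s}^t)=\alpha\max_{i\in L}g_i(\boldsymbol{s}^t).
\]
The equality holds because the eager rule picks the lender with the largest attainable utility (equivalently, the largest gain, since the current utility is fixed). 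The function $\Phi$ is bounded above on the compact set $\boldsymbol{S}$ by Lemma \ref{lemma:existence}. Summing the displayed inequality over $t$ therefore gives $\sum_t \max_i g_i(\boldsymbol{s}^t)<\infty$, and so $\max_i g_i(\boldsymbol{s}^t)\to 0$.

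\textbf{Limit points.} The map $\boldsymbol{s}\mapsto \max_{z_i\in S_i}\Phi(z_i,\boldsymbol{s}_{-i})$ is continuous by Berge's maximum theorem, since $S_i$ is a fixed compact set and $\Phi$ is continuous. Hence each $g_i$ is continuous. Take a limit point $\bar{\boldsymbol{s}}$ of $(\boldsymbol{s}^t)$, which exists by compactness of $\boldsymbol{S}$. Passing to the corresponding subsequence gives $g_i(\bar{\boldsymbol{s}})=0$ for all $i$. So no lender can improve unilaterally at $\bar{\boldsymbol{s}}$, meaning $\bar{\boldsymbol{s}}$ is a pure Nash equilibrium. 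By Theorem \ref{theorem:uniqueness}, $\bar{\boldsymbol{s}}=\boldsymbol{s}^{\ast}$. A sequence in a compact set whose only limit point is $\boldsymbol{s}^{\ast}$ converges to $\boldsymbol{s}^{\ast}$.

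As a cross-check, and as an alternative route, $\Phi(\boldsymbol{s}^t)\uparrow\Phi(\boldsymbol{s}^{\ast})$ together with strict concavity (Lemma \ref{lemma:strict_concavity}) and compactness also forces $\boldsymbol{s}^t\to\boldsymbol{s}^{\ast}$. The reason is that the superlevel sets of $\Phi$ shrink to the unique maximiser.

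The main obstacle is the limit-point step, namely ensuring that vanishing best-response gains really characterise the equilibrium. This is not automatic, because a best response need not be unique. To avoid that issue, I work with the value function $g_i$ rather than with the correspondence $\text{BR}_i$. Its continuity (via Berge) is the key technical point. The concavity inequality in the monotonicity step is routine, but I must check that it uses only concavity of $\Phi$ in $s_i$ and not uniqueness of $\hat{s}_i^t$.
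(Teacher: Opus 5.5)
Your proof is correct. After the shared first step (monotone, bounded $\Phi(\boldsymbol{s}^t)$), it takes a different route from the paper.

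The paper proves $\Phi(\boldsymbol{s}^t)\to\Phi(\boldsymbol{s}^{\ast})$ by contradiction using a quantitative estimate. From the Hessian bound (Lemma \ref{lemma:gradient_bound}), Lemma \ref{lemma:large_deviation} shows that at any $\boldsymbol{s}$ some lender can raise $\Phi$ by at least $(\Phi(\boldsymbol{s}^{\ast})-\Phi(\boldsymbol{s}))^2/C$, where $C=4m^4n^2a(\max_k c_k)^2$. Hence an eager $\alpha$-step cannot stall below $\Phi(\boldsymbol{s}^{\ast})$. The paper leaves implicit the passage from convergence of $\Phi$ to convergence of $\boldsymbol{s}^t$; your superlevel-set remark supplies that missing step.

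You instead sum the per-step improvements to get $\max_i g_i(\boldsymbol{s}^t)\to 0$. You then argue qualitatively: continuity of the gains (Berge), compactness, and uniqueness of the equilibrium (Theorem \ref{theorem:uniqueness}) force every limit point to equal $\boldsymbol{s}^{\ast}$. Your route has several advantages:
\begin{itemize}
\item It is shorter and gives convergence of the profiles directly.
\item It avoids the gradient bookkeeping of Lemmas \ref{lemma:gradient_bound}--\ref{lemma:large_deviation}, such as norm conventions and checking that the constructed deviation stays inside $S_i$.
\item It applies unchanged to any continuous exact potential game on compact convex strategy sets with utilities concave in own strategy and a unique equilibrium.
\end{itemize}
The paper's route buys an explicit per-step improvement bound. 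Combined with the $\alpha$-step inequality, it gives $e_{t+1}\le e_t-\alpha e_t^2/C$ for $e_t=\Phi(\boldsymbol{s}^{\ast})-\Phi(\boldsymbol{s}^t)$, which would yield an $O(1/t)$ rate. The same lemma is also reused for the randomised dynamics.

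One small correction: $g_{i^t}(\boldsymbol{s}^t)=\max_i g_i(\boldsymbol{s}^t)$ does not follow from ``the current utility is fixed.'' The current utilities $u_i(\boldsymbol{s}^t)$ differ across lenders, so maximising the attainable utility level is not the same as maximising the gain. The identity holds because the eager rule selects the largest utility \emph{increase}, which is how the paper's text and proof use it. Equivalently, it selects the largest $\max_{z_i\in S_i}\Phi(z_i,\boldsymbol{s}^t_{-i})$, since $\Phi(\boldsymbol{s}^t)$ is common to all lenders.

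Also, your concern about non-unique best responses does not arise here, because $u_i$ is strictly concave in $s_i$. Working with $g_i$ costs nothing, though.
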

\begin{proof}
    Let $G=(m,n,\boldsymbol{c},\boldsymbol{d},r_{\text{max}},r_{\text{min}})$ be a lending game with potential function $\Phi$ defined by (\ref{equation:potential_function}), let $\boldsymbol{s}^1$ and $\boldsymbol{s}^{\ast}$ be any initial strategy profile and the pure Nash equilibrium (i.e., the strategy profile maximising $\Phi$), respectively, and let $\alpha$ be an arbitrary value in $(0,1]$. Consider the eager $\alpha$-uniform best-response dynamics $(\boldsymbol{s}^t)_{t\in\mathbb{N}}$ defined above. Since $G$ is an exact potential game with potential function $\Phi$, we obtain that for any $t\in\mathbb{N}$,
    \begin{align*}
        u_{i^t}(s_{i^t}^{t+1},\boldsymbol{s}_{-i^t}^t)-u_{i^t}(s_{i^t}^t,\boldsymbol{s}_{-i^t}^t)=\Phi(s_{i^t}^{t+1},\boldsymbol{s}_{-i^t}^t)-\Phi(s_{i^t}^t,\boldsymbol{s}_{-i^t}^t).
    \end{align*}
    If $\boldsymbol{s}_{i^t}^t\not\in\text{BR}_{i^t}(\boldsymbol{s}^t)$, $u_{i^t}(\boldsymbol{s}_{i^t}^{t+1},\boldsymbol{s}_{-i^t}^t)>u_{i^t}(\boldsymbol{s}_{i^t}^t,\boldsymbol{s}_{-i^t}^t)$, so $\Phi(\boldsymbol{s}^{t+1})>\Phi(\boldsymbol{s}^t)$. If $\boldsymbol{s}_{i^t}^t\in\text{BR}_{i^t}(\boldsymbol{s}^t)$, $\Phi(\boldsymbol{s}^{t+1})=\Phi(\boldsymbol{s}^t)$. Thus, the sequence $(\Phi(\boldsymbol{s}^t))_{t\in\mathbb{N}}$ is non-decreasing. Considering that $\Phi$ is bounded from above by $\Phi(\boldsymbol{s}^{\ast})$, $(\Phi(\boldsymbol{s}^t))_{t\in\mathbb{N}}$ converges. Let $I$ be the limit, i.e.,
    \begin{align}\label{equation:limit}
        \lim_{t\to\infty}\Phi(\boldsymbol{s}^t)=I\leq\Phi(\boldsymbol{s}^{\ast}).
    \end{align}
    We show that the latter inequality in fact holds with equality, which will yield our claim.

    For proving that $I=\Phi(\boldsymbol{s}^{\ast})$, we first derive some properties relating to the gradient and Hessian of $\Phi$. First, we can straightforwardly derive the first-order and second-order partial derivatives of $\Phi$ by using the form (\ref{equation:quadratic_function}) of $\Phi$: In particular, for the first-order derivatives, 
    \begin{align*} 
        (\nabla\Phi(\boldsymbol{s}))_{ij}=(r_{\text{min}}-r_{\text{max}})\left(\frac{1}{d_{j}}\left(s_{ij}+\sum_{k\in L}s_{kj}\right)-1\right) 
    \end{align*}
    for all $(i,j)\in L\times B$, 
    and for the second-order derivatives, we have
    \begin{align*}
        (\boldsymbol{H}\Phi(\boldsymbol{s}))_{ij,k\ell}=
        \begin{cases}
            0 & \text{if } k\neq i \text{ and } \ell\neq j \\
            (r_{\text{min}}-r_{\text{max}})/d_{j} & \text{if } k\neq i \text{ and } \ell=j \\
            2(r_{\text{min}}-r_{\text{max}})/d_{j} & \text{if } k=i \text{ and } \ell=j 
        \end{cases}
    \end{align*}
    for all $((i,j),(k,\ell))\in(L\times B)^2$. For convenience, let
    \begin{align*}
        a=2(r_{\text{max}}-r_{\text{min}})/d_{j}.
    \end{align*}
    From the Hessian, it follows that $\nabla\Phi$ decreases everywhere at a rate of at most $a$. Therefore, if $\boldsymbol{s}$ is a point such that $(\nabla\Phi(\boldsymbol{s}))_{ij}>0$ for some coordinate $(i,j)$, then every point $\boldsymbol{s}^\prime$ that lies in the $\ell_1$-ball of radius $\Phi(\boldsymbol{s})/a$ around $\boldsymbol{s}$ also satisfies that $(\nabla\Phi(\boldsymbol{s}^\prime))_{ij}>0$. We can strengthen and generalise the latter insight to better lower bounds, and arbitrary linear combinations over the entries of $\nabla \Phi$, as follows.
    \begin{lemma}\label{lemma:gradient_bound}
        Let $\boldsymbol{v}\in\mathbb{R}^{mn}$ and suppose that $\boldsymbol{v}\cdot\nabla\Phi(\boldsymbol{s})>0$, then every point $\boldsymbol{s}^\prime$ in $\ell_1$-ball of radius $\boldsymbol{v}\cdot\nabla \Phi(\boldsymbol{s})/(2a\max_{i,j}|v_{ij}|)$ around $\boldsymbol{s}$ satisfies $\boldsymbol{v}\cdot\nabla\Phi(\boldsymbol{s}^\prime)\geq(1/2)\boldsymbol{v}\cdot\nabla\Phi(\boldsymbol{s})$.
    \end{lemma}
    A proof of this lemma is provided in Appendix \ref{apx:lemma:gradient_bound}.
    
    The following lemma forms the key to proving our claim by using Lemma \ref{lemma:gradient_bound}.
    \begin{lemma}\label{lemma:large_deviation}
        Let $\boldsymbol{s}$ be any strategy profile of $G$. There exists a lender $i\in L$ who can deviate from $s_i$ to $s_i^\prime$ such that $u_{i}(s_i^\prime,\boldsymbol{s}_{-i})-u_{i}(\boldsymbol{s})=\Phi(s_i^\prime,\boldsymbol{s}_{-i})-\Phi(\boldsymbol{s})\geq(\Phi(\boldsymbol{s}^{\ast})-\Phi(\boldsymbol{s}))^2/(4m^4n^2a(\max_{k\in L}c_k)^2)$.
    \end{lemma}
    \begin{proof} 
        The smoothness of the potential function $\Phi$ guarantees that the directional derivative of $\Phi$ exists everywhere, and along any vector. In particular, consider the directional derivative $\nabla_{\boldsymbol{v}}\Phi(\boldsymbol{s})$ at $\boldsymbol{s}$ in the direction of $\boldsymbol{s}^{\ast}$ (i.e., $\boldsymbol{v}$ is a unit vector in the direction $\boldsymbol{s}^{\ast}-\boldsymbol{s}^t$). By concavity of $\Phi$, $\nabla_{\boldsymbol{v}}\Phi(\boldsymbol{s})>0$, that is,
        \begin{align*}
            \nabla_{\boldsymbol{v}}\Phi(\boldsymbol{s})=\boldsymbol{v}\cdot\nabla\Phi(\boldsymbol{s})=\sum_{i,j}v_{ij}(\nabla\Phi(\boldsymbol{s}))_{ij}>0.
        \end{align*}
        By partitioning the above summation into $m$ parts, by lender, we conclude that for at least one lender $i\in L$, it holds that
        \begin{align*}
            \sum_{j\in B}v_{ij}(\nabla\Phi(\boldsymbol{s}))_{ij}>\frac{\boldsymbol{v}\cdot\nabla\Phi(\boldsymbol{s})}{m}.
        \end{align*}
        Then, define $\boldsymbol{v}^\prime$ by
        \begin{align*}
            v_{kj}^\prime=\left\{
            \begin{array}{ll}
                v_{ij} & \text{if } k=i, \\
                0 & \text{if } k\neq i, \\
            \end{array}
            \right. \quad \forall j\in B.
        \end{align*}
        Let $\boldsymbol{w}$ be the unit vector in the direction of $\boldsymbol{v}^\prime$ (i.e., $\boldsymbol{w}$ is an appropriate scaling of $\boldsymbol{v}^\prime$). As the length of $\boldsymbol{w}$ exceeds that of $\boldsymbol{v}^\prime$, it follows that $\boldsymbol{w}\cdot\nabla\Phi(\boldsymbol{s})>\boldsymbol{v}\cdot\nabla\Phi(\boldsymbol{s})/m$.
    
        Now, consider lender $i$'s deviation from $s_i$ to $s_i^\prime$, where for $j\in B$, $s_{ij}^\prime=s_{ij}+(\boldsymbol{v}\cdot\nabla\Phi(\boldsymbol{s})/(2ma\max_{i,j}|w_{ij}|))w_{ij}$, resulting in the point
        \begin{align*}
            \boldsymbol{s}^\prime=(s_i^\prime,\boldsymbol{s}_{-i})=\boldsymbol{s}+\frac{\boldsymbol{v}\cdot\nabla\Phi(\boldsymbol{s})}{2ma\max_{i,j}|w_{ij}|}\boldsymbol{w}.
        \end{align*}
        We observe that since $\|\boldsymbol{w}\|=\sum_{j\in B}|w_{ij}|=1$,
        \begin{align*}
            \|\boldsymbol{s}^\prime-\boldsymbol{s}\|=\frac{\boldsymbol{v}\cdot\nabla\Phi(\boldsymbol{s})}{2ma\max_{i,j}|w_{ij}|}\|\boldsymbol{w}\|<\frac{\boldsymbol{w}\cdot\nabla\Phi(\boldsymbol{s})}{2a\max_{i,j}|w_{ij}|},
        \end{align*}
        so $\boldsymbol{s}^\prime$ lies in the $\ell_1$-ball of radius $\boldsymbol{w}\cdot\nabla\Phi(\boldsymbol{s})/(2a \max_{i,j}|w_{ij}|)$ around $\boldsymbol{s}$, and hence by Lemma \ref{lemma:gradient_bound}, for every point $\boldsymbol{s}^{\prime\prime}$ on the line segment between $\boldsymbol{s}$ and $\boldsymbol{s}^\prime$, it holds that $\boldsymbol{w}\cdot\nabla\Phi(\boldsymbol{s}^{\prime\prime})\geq(1/2)\boldsymbol{w}\cdot\nabla\Phi(\boldsymbol{s})$. Thus, by integrating along the line segment from $\boldsymbol{s}$, in the direction $\boldsymbol{w}$, up to the point $\boldsymbol{s}^\prime$, and using the latter lower bound of $\nabla_{\boldsymbol{w}}\Phi(\boldsymbol{s})$ on this line segment, we obtain that the increase $\Phi(\boldsymbol{s}^\prime)-\Phi(\boldsymbol{s})$ can be bounded as follows. 
        \begin{align}
            & \Phi(\boldsymbol{s}^\prime)-\Phi(\boldsymbol{s})\geq\frac{\boldsymbol{w}\cdot\nabla\Phi(\boldsymbol{s})}{2}\|\boldsymbol{s}^\prime-\boldsymbol{s}\| \notag \\ 
            & \qquad =\frac{\boldsymbol{w}\cdot\nabla\Phi(\boldsymbol{s})}{2}\cdot\frac{\boldsymbol{v}\cdot\nabla\Phi(\boldsymbol{s})}{2ma\max_{i,j}|w_{ij}|}\|\boldsymbol{w}\| \notag \\
            & \qquad =\frac{(\boldsymbol{w}\cdot\nabla\Phi(\boldsymbol{s}))(\boldsymbol{v}\cdot\nabla\Phi(\boldsymbol{s}))}{4ma\max_{i,j}|w_{ij}|}>\frac{(\boldsymbol{v}\cdot\nabla\Phi(\boldsymbol{s}))^2}{4m^2a\max_{i,j}|w_{ij}|} \label{equation:increase_bound}.
        \end{align}
        
        Next, we derive a lower bound of $\boldsymbol{v}\cdot\nabla\Phi(\boldsymbol{s})$. Consider the line segment $\mathbb{L}$ which has $\boldsymbol{s}$ and $\boldsymbol{s}^{\ast}$ as its start-points and end-points. The potential function $\Phi$ increases from $\Phi(\boldsymbol{s})$ to $\Phi(s^{\ast})$ on $\mathbb{L}$, and the length of $\mathbb{L}$ is at most $mn\max_{k\in L}c_k$. By concavity of $\Phi$, the directional derivative at $\boldsymbol{s}$ in the direction of $\boldsymbol{v}$ must be at least the average increase-per-unit of $\Phi$ along $\mathbb{L}$, which is at least $(\Phi(\boldsymbol{s}^{\ast})-\Phi(\boldsymbol{s}))/(mn\max_{k\in L}c_k)$. In other words,
        \begin{align*}
            \boldsymbol{v}\cdot\nabla\Phi(\boldsymbol{s})\geq\frac{\Phi(\boldsymbol{s}^{\ast})-\Phi(\boldsymbol{s})}{mn\max_{k\in L}c_k}.
        \end{align*}
        Plugging this into (\ref{equation:increase_bound}), we obtain that the lender $i$ can beneficially deviate at $\boldsymbol{s}$ by changing their strategy to $\boldsymbol{s}_i'$ and thereby improving their utility (and thus the potential function $\Phi$) by an amount of 
        \begin{align*}
            \frac{(\Phi(\boldsymbol{s}^{\ast})-\Phi(\boldsymbol{s}))^2}{4m^4n^2a(\max_{i,j}|w_{ij}|)(\max_{k\in L}c_k)^2}\geq\frac{(\Phi(\boldsymbol{s}^{\ast})-\Phi(\boldsymbol{s}))^2}{4m^4n^2a(\max_{k\in L}c_k)^2}.
        \end{align*}
    \end{proof}
    
    We are now ready to show that (\ref{equation:limit}) holds with equality. Suppose for contradiction that $I<\Phi(\boldsymbol{s}^{\ast})$. Let $\delta>0$ be a sufficiently small value such that
    \begin{align*}
        \delta<\frac{\alpha}{2}\cdot\frac{(\Phi(\boldsymbol{s}^{\ast})-I)^2}{4m^4n^2a(\max_{k\in L}c_k)^2}.
    \end{align*}
    Let $t$ be a time step such that $\Phi(\boldsymbol{s}^t)>I-\delta$. By Lemma \ref{lemma:large_deviation}, there exists a lender $i\in L$ for which every best-response $\hat{s}_i^t\in\text{BR}_{i}(\boldsymbol{s}^t)$ satisfies $\Phi(\hat{s}_i^t,\boldsymbol{s}_{-i}^t)-\Phi(\boldsymbol{s}^t)\geq2\delta/\alpha$. By the definition of eager best-response dynamics, it must also hold for the updating lender $i^t$ that $\Phi(\hat{s}_{i^t}^t,\boldsymbol{s}_{-i^t}^t)-\Phi(\boldsymbol{s}^t)\geq2\delta/\alpha$. Let $z_{i^t}\in\{s_{i^t}^t+\alpha(\hat{s}_{i^t}^t-s_{i^t}^t):\hat{s}_{i^t}^t\in\text{BR}_{i^t}(\boldsymbol{s}^t)\}$, it holds that $z_{i^t}=(1-\alpha)s_{i^t}^t+\alpha\hat{s}_{i^t}^t$ for some $\hat{s}_{i^t}^t\in\text{BR}_{i^t}(\boldsymbol{s}^t)$. By concavity of $\Phi$, we obtain
    \begin{align*}
        \Phi(z_{i^t},\boldsymbol{s}_{-i^t}^t)-\Phi(\boldsymbol{s}^t) & =\Phi((1-\alpha)\boldsymbol{s}^t+\alpha(\hat{s}_{i^t}^t,\boldsymbol{s}_{-i^t}^t)) - \Phi(\boldsymbol{s}^t) \\
        & \geq\alpha(\Phi(\hat{s}_{i^t}^t,\boldsymbol{s}_{-i^t}^t) -\Phi(\boldsymbol{s}^t)) \geq 2\delta.
    \end{align*}
    Thus, given that $\boldsymbol{s}^{t+1}_i\in\{s_i^t+\alpha(\hat{s}_i^t-s_i^t):\hat{s}_i^t\in\text{BR}_{i}(\boldsymbol{s}^t)\}$, we have that $\Phi(s^{t+1})\geq\Phi(s^t)+2\delta\geq I-\delta+2\delta>I$, which is a contradiction. This proves our claim for eager $\alpha$-uniform best-response dynamics.
\end{proof}

\begin{theorem}\label{theorem:randomised_alpha_uniform_best_response_dynamics}
    For any lending game $G$, any $\alpha\in(0,1]$, and any initial strategy profile, the randomised $\alpha$-uniform best-response dynamics converges to the pure Nash equilibrium $\boldsymbol{s}^{\ast}$ almost surely.
\end{theorem}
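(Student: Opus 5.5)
We follow the proof of Theorem \ref{theorem:eager_alpha_uniform_best_response_dynamics}, replacing its deterministic "eager" selection step by an expected-improvement argument, and finish with a supermartingale/Borel--Cantelli type argument.

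\textbf{Step 1: monotonicity.} Exactly as in the eager case, the exact potential property (Theorem \ref{theorem:potential_game}) shows that every realisation of the randomised $\alpha$-uniform dynamics satisfies $\Phi(\boldsymbol{s}^{t+1})\geq\Phi(\boldsymbol{s}^t)$. Whichever lender is selected moves to a convex combination of its current strategy and a best response, and concavity of $\Phi$ in $s_i$ gives a non-negative change. Since $\Phi\leq\Phi(\boldsymbol{s}^{\ast})$, the sequence $(\Phi(\boldsymbol{s}^t))_t$ converges pathwise to a random limit $I\leq\Phi(\boldsymbol{s}^{\ast})$.

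\textbf{Step 2: a uniform expected gain away from $\boldsymbol{s}^\ast$.} Write $D_t=\Phi(\boldsymbol{s}^{\ast})-\Phi(\boldsymbol{s}^t)$ and $K=4m^4n^2a(\max_k c_k)^2$. By Lemma \ref{lemma:large_deviation}, at every profile $\boldsymbol{s}^t$ some lender $i$ has a deviation raising $\Phi$ by at least $D_t^2/K$. Every best response does at least as well. By the concavity computation used in the eager proof, the $\alpha$-step toward it raises $\Phi$ by at least $\alpha D_t^2/K$. This lender is selected with probability at least $p_{\min}$, and all other selections give non-negative change, so
\begin{align*}
\mathbb{E}\left[\Phi(\boldsymbol{s}^{t+1})-\Phi(\boldsymbol{s}^t)\,\middle\vert\,\boldsymbol{s}^1,\dots,\boldsymbol{s}^t\right]\geq p_{\min}\alpha D_t^2/K.
\end{align*}
Equivalently, $\mathbb{E}[D_{t+1}\mid\mathcal{F}_t]\leq D_t-cD_t^2$ with $c=p_{\min}\alpha/K$.

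\textbf{Step 3: concluding $I=\Phi(\boldsymbol{s}^\ast)$ almost surely.} Sum the inequality and take expectations:
\begin{align*}
c\sum_{t}\mathbb{E}[D_t^2]\leq \mathbb{E}[D_1]<\infty,
\end{align*}
since $\Phi$ is bounded on the compact $\boldsymbol{S}$. By monotone convergence, $\sum_t D_t^2<\infty$ almost surely, so $D_t\to0$ almost surely. Alternatively, one can apply Jensen to get $\mathbb{E}[D_{t+1}]\leq\mathbb{E}[D_t]-c\,\mathbb{E}[D_t]^2$, hence $\mathbb{E}[D_t]=O(1/t)$, and combine this with the monotonicity of $D_t$. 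Either route gives $\Phi(\boldsymbol{s}^t)\to\Phi(\boldsymbol{s}^\ast)$ almost surely.

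\textbf{Step 4: from values to profiles.} Convergence of potential values must be upgraded to convergence of the profiles themselves. By Lemma \ref{lemma:strict_concavity} and compactness of $\boldsymbol{S}$, for every $\varepsilon>0$ the quantity
\begin{align*}
\eta(\varepsilon)=\min\{\Phi(\boldsymbol{s}^\ast)-\Phi(\boldsymbol{s}):\|\boldsymbol{s}-\boldsymbol{s}^\ast\|\geq\varepsilon,\ \boldsymbol{s}\in\boldsymbol{S}\}
\end{align*}
is positive, because $\boldsymbol{s}^\ast$ is the unique maximiser and $\Phi$ is continuous. Hence $D_t\to0$ forces $\boldsymbol{s}^t\to\boldsymbol{s}^\ast$. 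This step also quietly completes the eager theorem.

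\textbf{Main obstacle.} The delicate step is Step 2, namely ensuring the bound is uniform and measurable with respect to the random history. Lemma \ref{lemma:large_deviation} gives a bound depending only on $D_t$, not on which lender attains it. Its proof relies on the cruder constants of Lemma \ref{lemma:gradient_bound}, but only the existence of some $g(D)>0$ increasing in $D$ is needed. If the explicit constant proves troublesome, I would instead argue by contradiction: on the event $\{I<\Phi(\boldsymbol{s}^\ast)-\varepsilon\}$, the conditional gain is at least $c\varepsilon^2$ at every step. A conditional Borel--Cantelli argument then shows that infinitely many gains of size at least $\alpha\varepsilon^2/K$ occur almost surely, contradicting boundedness of $\Phi$.
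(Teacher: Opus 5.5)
Your argument is correct, up to a caveat about Lemma~\ref{lemma:large_deviation} that affects the paper's proof equally. Its probabilistic core, however, differs from the paper's.

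The paper argues pathwise. Since every lender is chosen with probability at least $p_\text{min}$, the eager lender $i^{\ast}(\boldsymbol{s}^t)$ is selected infinitely often with probability one. On each such realisation the deterministic contradiction from Theorem~\ref{theorem:eager_alpha_uniform_best_response_dynamics} is replayed verbatim: once $\Phi(\boldsymbol{s}^t)>I-\delta$, a single eager $\alpha$-step overshoots the putative limit $I$.

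You instead turn the lemma into the drift inequality $\mathbb{E}[D_{t+1}\mid\mathcal{F}_t]\leq D_t-cD_t^2$ and telescope expectations to get $\sum_t D_t^2<\infty$ almost surely. Your route needs only that one certified lender is picked with conditional probability at least $p_\text{min}$. It bypasses the ``infinitely often'' step, which in the paper rests on an unstated conditional Borel--Cantelli argument, and it yields the rate $\mathbb{E}[D_t]=O(1/t)$. The paper's route is more economical, recycling the eager proof with a single probabilistic fact. Your Step~4 also supplies the passage from $\Phi(\boldsymbol{s}^t)\to\Phi(\boldsymbol{s}^{\ast})$ to $\boldsymbol{s}^t\to\boldsymbol{s}^{\ast}$, which the paper leaves implicit.

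Your remark that any positive nondecreasing $g(D)$ suffices is more than a hedge, because Lemma~\ref{lemma:large_deviation} is false as stated. Take $m=n=1$, $c_1<d_1/9$ and $\boldsymbol{s}=\boldsymbol{0}$. Then $D=(r_\text{max}-r_\text{min})c_1(d_1-c_1)/d_1$, while the claimed gain $D^2/K$ equals $D(d_1-c_1)/(8c_1)>D$, which exceeds the largest possible improvement. The deviation constructed in the lemma's proof leaves $S_1$: here it lands at $d_1/4>c_1$.

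Since your fallback still quotes the lemma's constant, derive $g$ directly:
\begin{enumerate}
\item By concavity, $\nabla\Phi(\boldsymbol{s})\cdot(\boldsymbol{s}^{\ast}-\boldsymbol{s})\geq D$, so some lender $i$ has $\nabla_i\Phi(\boldsymbol{s})\cdot(s_i^{\ast}-s_i)\geq D/m$.
\item Moving $i$ a fraction $\theta\in[0,1]$ toward $s_i^{\ast}$ stays in $S_i$.
\item Because $\Phi$ is quadratic, its curvature along such segments is bounded by some constant $M$. Optimising $\theta$ then gives a gain of at least $g(D)=\min\{D/(2m),\,D^2/(2m^2M)\}$.
\end{enumerate}
With this $g$ you have $\mathbb{E}[D_{t+1}\mid\mathcal{F}_t]\leq D_t-p_\text{min}\alpha\,g(D_t)$. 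Your telescoping gives $\sum_t g(D_t)<\infty$ almost surely, and monotonicity of $D_t$ forces $D_t\to0$.
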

\begin{proof}
    For the randomised $\alpha$-uniform best-response dynamics, we can take the same approach as in the proof for the eager $\alpha$-uniform best-response dynamics (Theorem \ref{theorem:eager_alpha_uniform_best_response_dynamics}); however, we need to consider the difference that the sequence of strategies $\boldsymbol{s}^t$ is random instead of deterministic. Here, we have that the sequence $(\Phi(\boldsymbol{s}^t))_{t\in\mathbb{N}}$ is always non-decreasing and bounded from above by $\Phi(\boldsymbol{s}^{\ast})$ so $\lim_{t\to\infty}\Phi(\boldsymbol{s}^t)$ always exists, and therefore it suffices to show that $\boldsymbol{Pr}[\lim_{t\to\infty}\Phi(\boldsymbol{s}^t)=\Phi(\boldsymbol{s}^{\ast})]=1$. 
    
    For strategy profile $\boldsymbol{s}$, let $i^{\ast}(\boldsymbol{s})\in\arg\max_{i\in L}\{\Phi(\hat{s}_i):\hat{s}_i\in\text{BR}_{i}(\boldsymbol{s})\}$. In a randomised best-response dynamics, by definition there exists $p_\text{min}>0$ such that at every time step, every lender has a probability of at least $p_\text{min}$ of being selected as the updating lender, so the (random) set $\{t\in\mathbb{N}:i^t=i^{\ast}(\boldsymbol{s}^t)\}$ is of infinite size with probability $1$. Let $(\boldsymbol{z}^t)_{t\in\mathbb{N}}$ be any realisation of the randomised best-response dynamics that belongs to the latter event, i.e., for which $i^t=i^{\ast}(\boldsymbol{s}^{\ast})$ for infinitely many $t$ (where $i^t$ is now with respect to the concrete sequence $\boldsymbol{z}$). Letting $\lim_{t\to\infty}\Phi(\boldsymbol{z}^t)=I$, we show that $I=\Phi(\boldsymbol{s}^{\ast})$. For contradiction, suppose that $I<\Phi(\boldsymbol{s}^{\ast})$, let $0<\delta<\alpha(\Phi(\boldsymbol{s}^{\ast})-I)^2/(8m^4n^2a(\max_{k\in L}c_k)^2)$, and let $t$ be a time step such that both $\Phi(\boldsymbol{z}^t)>I-\delta$ and $i^t=i^{\ast}(\boldsymbol{z}^t)$.
    By Lemma \ref{lemma:large_deviation}, there exists a lender $i$ for which every strategy $\hat{z}_i\in\text{BR}_{i}(\boldsymbol{z}^t)$ satisfies $\Phi(\hat{z}_i,\boldsymbol{z}_{-i}^t) - \Phi(\boldsymbol{z}^t) \geq 2\delta/\alpha$. By the fact that $i^t=i^{\ast}(\boldsymbol{z}^t)$, it must thus also hold for lender $i^t$ that $\Phi(\hat{z}_{i^t},\boldsymbol{z}_{-i^t}^t)-\Phi(\boldsymbol{z}^t) \geq 2\delta/\alpha$. Thus, if we let $y_{i^t}\in\{z_{i^t}^t+\alpha(\hat{z}_{i^t}^t-z_{i^t}^t):\hat{z}_{i^t}^t\in\text{BR}_{i^t}(\boldsymbol{z}^t)\}$, it holds that $y_{i^t}=(1-\alpha)z_{i^t}^t+\alpha\hat{z}_{i^t}^t$ for some $\hat{z}_{i^t}^t\in\text{BR}_{i^t}(\boldsymbol{z}^t)$. By concavity of $\Phi$, we obtain
    \begin{align*}
        \Phi(y_{i^t},\boldsymbol{z}_{-i^t}^t)-\Phi(\boldsymbol{z}^t) & =\Phi((1-\alpha)\boldsymbol{z}^t+\alpha(\hat{z}_{i^t}^t,\boldsymbol{z}_{-i^t}^t)) - \Phi(\boldsymbol{z}^t) \\
        & \geq\alpha(\Phi(\hat{z}_{i^t}^t,\boldsymbol{z}_{-i^t}^t) -\Phi(\boldsymbol{z}^t)) \geq 2\delta.
    \end{align*}
    Thus, given that $\boldsymbol{z}^{t+1}_i\in\{z_{i}^t+\alpha(\hat{z}_i^t-z_{i}^t):\hat{z}_i^t\in\text{BR}_{i}(\boldsymbol{z}^t)\}$, we have that $\Phi(z^{t+1})\geq\Phi(z^t)+2\delta\geq I-\delta+2\delta>I$, which is a contradiction and shows that any realisation $(\boldsymbol{z}^t)_{t\in\mathbb{N}}$ of the randomised best-response dynamics for which it holds that $i^t=i^{\ast}(\boldsymbol{z}^t)$ infinitely often, satisfies that $\lim_{t\to\infty}\Phi(\boldsymbol{z}^t)=\Phi(\boldsymbol{s}^{\ast})$. Thus, $\boldsymbol{Pr}[\lim_{t \to\infty} \Phi(\boldsymbol{s}^t)=\Phi(\boldsymbol{s}^{\ast})]\geq\boldsymbol{Pr}[i^t=i^{\ast}(\boldsymbol{s}^t) \text{ inf. often.}]=1$, which proves the claim for randomised $\alpha$-uniform best-response dynamics.
\end{proof}

\subsubsection{Synchronous Updates}
Rosen \cite{rosen1965existence} provides a result on the convergence of a certain discrete synchronous update dynamics obtained by discretising a continuous \textit{pseudo-gradient dynamics}, in which every player moves continuously into the direction of the gradient (projected on $\boldsymbol{S}$) of their utility function. Convergence to the pure Nash equilibrium has been shown to hold for this dynamics in case the game satisfies a property called \textit{diagonal strict concavity}. The pseudo-gradient dynamics is related to best-response dynamics in the sense that players continuously and simultaneously change their strategy in the best ``local'' direction, as prescribed by the gradient of their utility functions. Note that the continuous pseudo-gradient dynamics differs from what is known as \textit{continuous best-response dynamics}, which we consider separately in Section \ref{section:continuous_time_best_response_dynamics}.

\begin{definition}[Pseudo-gradient]
    For a lending game $G$ in which the lenders have utility functions $(u_1(\boldsymbol{s}),\dots,u_m(\boldsymbol{s}))$, and a positive weight vector $\boldsymbol{r}=(r_1,\dots,r_m)\in\mathbb{R}^m_{>0}$,
    the \textit{$\boldsymbol{r}$-pseudo-gradient} $g:\boldsymbol{S}\to\mathbb{R}^{mn}$ is defined as
    \begin{align*}
        g(\boldsymbol{s},\boldsymbol{r}):=(r_1\nabla_1u_1(\boldsymbol{s}),\dots,r_m\nabla_mu_m(\boldsymbol{s}))
    \end{align*}
    where $\nabla_iu_i(\boldsymbol{s})$ denotes the gradient of $u_{i}(\boldsymbol{s})$ on $s_i$ for $i\in L$.
\end{definition}

A \textit{discretised pseudo-gradient dynamics} is then defined as a dynamics $(\boldsymbol{s}^t)_{t\in\mathbb{N}}$ with any initial strategy profile $\boldsymbol{s}^1\in\boldsymbol{S}$ such that $\boldsymbol{s}^{t+1}=\boldsymbol{s}^t+\alpha_t g^\prime(\boldsymbol{s}^t,\boldsymbol{r})$, where $g^\prime$ denotes a projection of the $\boldsymbol{r}$-pseudo-gradient (for some $\boldsymbol{r}\in\mathbb{R}_{\geq0}^m$) such that moving in the direction of $g^\prime(\boldsymbol{s}^t,\boldsymbol{r})$ is feasible without moving outside the strategy space $\boldsymbol{S}$. Here, $(\alpha_t)_{t \in \mathbb{N}}$ are appropriately chosen sufficiently small step-sizes. Informally, this means that at each step, every player moves a small $\alpha_t$ step in the direction of steepest ascent of their utility function, restricted to not moving outside $\boldsymbol{S}$. For a precise and formal definition of the dynamics, we refer the reader to \cite{rosen1965existence}.

\begin{theorem}[Rosen \cite{rosen1965existence}, Theorem 10]
    Let $\boldsymbol{r}\in\mathbb{R}_{\geq0}^m$ and let $\nabla g(\boldsymbol{s},\boldsymbol{r})$ denote the Jacobian matrix of the $\boldsymbol{r}$-pseudo-gradient $g(\boldsymbol{s},\boldsymbol{r})$. If $\nabla g(\boldsymbol{s},\boldsymbol{r})+(\nabla g(\boldsymbol{s},\boldsymbol{r}))^\top$ is negative definite for all $\boldsymbol{s}\in\boldsymbol{S}$, then there exists $(\alpha_t)_{t\in\mathbb{N}}$ for which the discretised pseudo-gradient dynamics converge to the pure Nash equilibrium from any initial strategy profile.
\end{theorem}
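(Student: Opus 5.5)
This theorem is Rosen's result, cited from \cite{rosen1965existence}. I would not reprove it in full; instead I would sketch Rosen's argument, and the paper's use of it will be to verify the hypothesis for lending games. The plan is as follows.

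First, diagonal strict concavity. Under the negative definiteness hypothesis, for any two distinct profiles $\boldsymbol{s}^0,\boldsymbol{s}^1\in\boldsymbol{S}$, write $\boldsymbol{s}(\tau)=\boldsymbol{s}^0+\tau(\boldsymbol{s}^1-\boldsymbol{s}^0)$ and integrate the Jacobian along this segment. This gives
\begin{align*}
(\boldsymbol{s}^1-\boldsymbol{s}^0)^\top\big(g(\boldsymbol{s}^1,\boldsymbol{r})-g(\boldsymbol{s}^0,\boldsymbol{r})\big)=\int_0^1(\boldsymbol{s}^1-\boldsymbol{s}^0)^\top\nabla g(\boldsymbol{s}(\tau),\boldsymbol{r})(\boldsymbol{s}^1-\boldsymbol{s}^0)\,d\tau<0,
\end{align*}
since $x^\top M x=\tfrac12 x^\top(M+M^\top)x$. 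Because $\boldsymbol{S}$ is compact and $\nabla g$ is continuous, the negative definiteness is in fact uniform. This yields a strong monotonicity constant $\kappa>0$ with
\begin{align*}
(\boldsymbol{s}^1-\boldsymbol{s}^0)^\top\big(g(\boldsymbol{s}^1)-g(\boldsymbol{s}^0)\big)\le-\kappa\|\boldsymbol{s}^1-\boldsymbol{s}^0\|^2 .
\end{align*}

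Second, the equilibrium as a variational inequality. A profile $\boldsymbol{s}^\ast$ is a pure Nash equilibrium if and only if, for each $i$, $s_i^\ast$ maximises the concave function $u_i(\cdot,\boldsymbol{s}^\ast_{-i})$ over $S_i$. Each $u_i$ is concave in $s_i$ because the diagonal blocks of $\nabla g$ are negative definite. Weighting by $r_i>0$ and summing over players, the equilibrium condition becomes
\begin{align*}
(\boldsymbol{s}-\boldsymbol{s}^\ast)^\top g(\boldsymbol{s}^\ast,\boldsymbol{r})\le 0 \quad \forall\boldsymbol{s}\in\boldsymbol{S}.
\end{align*}
Existence follows from Kakutani's theorem or compactness, and uniqueness follows from strict monotonicity.

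Third, the Lyapunov argument, which I expect to be the main step. I would take the projected step $\boldsymbol{s}^{t+1}=P_{\boldsymbol{S}}(\boldsymbol{s}^t+\alpha_t g(\boldsymbol{s}^t,\boldsymbol{r}))$, which is a concrete admissible choice of $g^\prime$, and use $V_t=\|\boldsymbol{s}^t-\boldsymbol{s}^\ast\|^2$ as the Lyapunov function. The projection is nonexpansive and $\boldsymbol{s}^\ast=P_{\boldsymbol{S}}(\boldsymbol{s}^\ast+\alpha g(\boldsymbol{s}^\ast))$. Together with monotonicity and the variational inequality, this gives
\begin{align*}
V_{t+1}\le V_t-2\alpha_t\kappa V_t+\alpha_t^2 G^2,
\end{align*}
where $G$ bounds $\|g\|$ on the compact set $\boldsymbol{S}$. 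Choosing $\alpha_t\to0$ with $\sum_t\alpha_t=\infty$, or a constant step $\alpha<\kappa/L_g^2$ combined with Lipschitz continuity of $g$, forces $V_t\to0$.

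The main obstacle is exactly this step. The hard part is handling the projection correctly, since Rosen's original formulation uses active-constraint projections rather than Euclidean projection. It also requires extracting a uniform $\kappa$ from the pointwise negative definiteness, which is where compactness of $\boldsymbol{S}$ and continuity of $\nabla g$ are essential.
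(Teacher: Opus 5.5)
This statement is not proved in the paper. The paper imports Rosen's Theorem 10 as a black box and only verifies its hypothesis for lending games in the subsequent proposition. Your sketch is therefore a self-contained substitute rather than a reconstruction, and it is sound:
\begin{itemize}
\item Negative definiteness, made uniform by compactness of $\boldsymbol{S}$, gives strong monotonicity of $g$. This step additionally uses continuity of $\nabla g$, which the quoted statement does not assume.
\item The product structure of $\boldsymbol{S}$ makes pure equilibria exactly the solutions of your variational inequality.
\item Nonexpansiveness of $P_{\boldsymbol{S}}$ gives $V_{t+1}\le(1-2\alpha_t\kappa)V_t+\alpha_t^2G^2$. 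For a constant step $\alpha<\kappa/L_g^2$ it gives the contraction factor $1-2\alpha\kappa+\alpha^2L_g^2<1-\alpha\kappa$.
\end{itemize}
You can also resolve the mismatch between $\boldsymbol{r}\in\mathbb{R}^m_{\geq0}$ in the statement and your use of $r_i>0$. The $i$-th diagonal block of $\nabla g+(\nabla g)^\top$ is $2r_i\nabla^2_{ii}u_i$, so the hypothesis itself forces $r_i>0$.

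Your route buys explicit step sizes and, for constant steps, a linear rate. In the lending game it is even more direct: $g(\cdot,\boldsymbol{1})=\nabla\Phi$ has constant Hessian. So one can take $\kappa=(r_{\text{max}}-r_{\text{min}})/\max_j d_j$ and $L_g=(m+1)(r_{\text{max}}-r_{\text{min}})/\min_j d_j$ with no compactness argument, and the iteration is just projected gradient ascent on a strongly concave quadratic.

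The citation buys coverage of the dynamics the paper actually writes down: $\boldsymbol{s}^{t+1}=\boldsymbol{s}^t+\alpha_t g'(\boldsymbol{s}^t)$, with a step-independent feasible direction coming from Rosen's active-constraint projection (the form $g_i'=g_i-\mu_i$). Your iterate $P_{\boldsymbol{S}}(\boldsymbol{s}^t+\alpha_t g(\boldsymbol{s}^t))$ moves along a direction that depends on $\alpha_t$. So you prove the existential claim for one admissible instance of a discretised pseudo-gradient dynamics, not for Rosen's scheme. Given how informally the paper defines the dynamics, that is acceptable. But you should present it that way, rather than as a sketch of Rosen's own argument, which your opening sentence currently claims it is.
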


In particular, when $\boldsymbol{r}=\boldsymbol{1}$, the $\boldsymbol{1}$-pseudo-gradient $g(\boldsymbol{s},\boldsymbol{1})$ is equivalent to the concatenation of standard gradients of lenders' utility functions. For simplicity, let $g(\boldsymbol{s}):=g(\boldsymbol{s},\boldsymbol{1})$. More specifically, in the lending game $G$, the above projection $g^\prime$ can be expressed by
\begin{align*}
    g_i^\prime(\boldsymbol{s}^t)=g_i(\boldsymbol{s}^t)-\mu_i \quad \forall i\in L.
\end{align*}
We establish that for $\boldsymbol{r}=\boldsymbol{1}$ the required condition on the Jacobian holds, so by the above theorem, for appropriate step-sizes $(\alpha_t)_{t\in\mathbb{N}}$, the discretised pseudo-gradient dynamics converges to the unique Nash equilibrium from any initial strategy profile.

\begin{proposition}
    For a lending game $G$, let $\nabla g(\boldsymbol{s})$ be the Jacobian matrix of its $\boldsymbol{1}$-pseudo-gradient $g(\boldsymbol{s})$ at strategy profile $\boldsymbol{s}$. It holds that $\nabla g(\boldsymbol{s})+(\nabla g(\boldsymbol{s}))^\top$ is negative definite.
\end{proposition}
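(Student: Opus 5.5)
The plan is to identify the $\boldsymbol{1}$-pseudo-gradient with the gradient of the potential $\Phi$. Then $\nabla g(\boldsymbol{s})$ is the Hessian of $\Phi$, and negative definiteness reduces to a direct computation on the quadratic form (\ref{equation:quadratic_function}).

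\textbf{Step 1: $g=\nabla\Phi$.} From (\ref{equation:interest_rate}) and (\ref{equation:utility_function}), for each lender $i\in L$ and borrower $j\in B$ we have
\begin{align*}
\frac{\partial u_i}{\partial s_{ij}}(\boldsymbol{s})=(r_{\text{min}}-r_{\text{max}})\left(\frac{1}{d_j}\left(s_{ij}+\sum_{k\in L}s_{kj}\right)-1\right).
\end{align*}
This follows from the product rule, because $s_{ij}$ enters both the factor $s_{ij}$ and the rate $r_j(\boldsymbol{s})$. The expression coincides with $(\nabla\Phi(\boldsymbol{s}))_{ij}$ as computed in the proof of Theorem \ref{theorem:eager_alpha_uniform_best_response_dynamics}. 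This is also what one expects from Theorem \ref{theorem:potential_game}: in an exact potential game, each player's own-gradient equals the corresponding block of $\nabla\Phi$. Hence $g(\boldsymbol{s})=\nabla\Phi(\boldsymbol{s})$ on the whole of $\boldsymbol{S}$.

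\textbf{Step 2: reduce to the Hessian.} By Step 1, $\nabla g(\boldsymbol{s})=\boldsymbol{H}\Phi(\boldsymbol{s})$. This matrix is symmetric, and since $\Phi$ is quadratic it is constant in $\boldsymbol{s}$. Therefore $\nabla g(\boldsymbol{s})+(\nabla g(\boldsymbol{s}))^\top=2\boldsymbol{H}\Phi$, and it suffices to show that $\boldsymbol{H}\Phi$ is negative definite.

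\textbf{Step 3: negative definiteness.} Using the Hessian entries listed earlier, the Hessian has no coupling between different borrowers. After reordering coordinates by borrower, it is block diagonal with one $m\times m$ block per $j\in B$, equal to $-\frac{r_{\text{max}}-r_{\text{min}}}{d_j}(I_m+\boldsymbol{1}\boldsymbol{1}^\top)$. Equivalently, for any $\boldsymbol{x}\in\mathbb{R}^{mn}$,
\begin{align*}
\boldsymbol{x}^\top\boldsymbol{H}\Phi\,\boldsymbol{x}=-(r_{\text{max}}-r_{\text{min}})\sum_{j\in B}\frac{1}{d_j}\left(\sum_{i\in L}x_{ij}^2+\left(\sum_{i\in L}x_{ij}\right)^2\right).
\end{align*}
Since $r_{\text{max}}>r_{\text{min}}$ and $d_j>0$, every term is nonpositive. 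The whole expression vanishes only if $\sum_i x_{ij}^2=0$ for all $j$, that is, only if $\boldsymbol{x}=\boldsymbol{0}$. Hence $\boldsymbol{H}\Phi$ is negative definite, and $\nabla g(\boldsymbol{s})+(\nabla g(\boldsymbol{s}))^\top$ is negative definite for every $\boldsymbol{s}\in\boldsymbol{S}$. Alternatively, one can argue by eigenvalues: $I_m+\boldsymbol{1}\boldsymbol{1}^\top$ has eigenvalues $1$ and $m+1$.

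The only step requiring care is Step 1. The derivative of $u_i$ must pick up the $s_{ij}/d_j$ term from the rate's dependence on $i$'s own allocation, and the block notation $\nabla_i u_i$ must be matched to the $(i,j)$ indexing of $\nabla\Phi$. Once that identification is made, the rest is a routine quadratic-form computation, and it is essentially the same calculation as in Lemma \ref{lemma:strict_concavity}.
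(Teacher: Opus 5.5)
Your proposal is correct and follows essentially the same route as the paper: identify the Jacobian of the $\boldsymbol{1}$-pseudo-gradient with the constant, symmetric Hessian of $\Phi$, and reorder coordinates by borrower to get a block-diagonal matrix. You then show that each block's quadratic form $\frac{r_{\text{min}}-r_{\text{max}}}{d_j}\bigl(\sum_i v_i^2+(\sum_i v_i)^2\bigr)$ is negative for nonzero $\boldsymbol{v}$. The only cosmetic difference is the order of steps: you derive $g=\nabla\Phi$ first from the exact-potential property, whereas the paper computes the second derivatives of the $u_i$ directly and then notes that they coincide with the Hessian of $\Phi$.
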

A proof of this proposition is provided in Appendix \ref{apx:prop47} . This proposition establishes that there exist step-sizes for which the discretised pseudo-gradient dynamics related to the discrete-time best-response dynamics, using the $\boldsymbol{1}$-pseudo-gradient, converges to the unique pure Nash equilibrium.

\subsection{Continuous-Time Dynamics}\label{section:continuous_time_best_response_dynamics}
Now we consider the continuous-time scenario. Given the current strategy profile $\boldsymbol{s}^t$, continuous-time best-response dynamics is defined by the following differential equations \cite{hofbauer2006best,barron2010best,leslie2020best}.
\begin{align*}
    \dfrac{ds_i^t}{dt}\in \hat{s}^t_i-s_i^t, \quad \hat{s}^t_i\in\text{BR}_{i}(\boldsymbol{s}^t) \quad \forall i\in L.
\end{align*} 
Note that due to the strict concavity of $\Phi$, $\text{BR}_{i}(\boldsymbol{s})$ is a singleton for all $\boldsymbol{s} \in \boldsymbol{S}$, which implies that the right hand side is a continuous function. Furthermore, $\text{BR}_{i}(\boldsymbol{s})$ is in fact only a function of $\boldsymbol{s}_{-i}$, which implies that the right hand side is continuously differentiable, so that the dynamics is well-defined.

\balance

\begin{theorem}
    The continuous-time best-response dynamics converges to the unique pure Nash equilibrium $\boldsymbol{s}^{\ast}$ from any initial state.
\end{theorem}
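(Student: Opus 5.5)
We prove convergence with a Lyapunov function built from the potential. The candidate we try first is
\begin{align*}
    V(\boldsymbol{s})=\sum_{i\in L}\left(\Phi(\hat{s}_i(\boldsymbol{s}),\boldsymbol{s}_{-i})-\Phi(\boldsymbol{s})\right),
\end{align*}
where $\hat{s}_i(\boldsymbol{s})$ is the unique element of $\text{BR}_{i}(\boldsymbol{s})$. An alternative is simply $\Phi(\boldsymbol{s}^{\ast})-\Phi(\boldsymbol{s})$. By Theorem \ref{theorem:potential_game}, each summand equals lender $i$'s best-response gain $u_i(\hat{s}_i,\boldsymbol{s}_{-i})-u_i(\boldsymbol{s})\geq 0$. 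Hence $V\geq0$ on $\boldsymbol{S}$, and $V(\boldsymbol{s})=0$ iff every lender is best-responding, i.e.\ iff $\boldsymbol{s}$ is a pure Nash equilibrium. By Theorem \ref{theorem:uniqueness}, this happens iff $\boldsymbol{s}=\boldsymbol{s}^{\ast}$.

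A preliminary step is forward invariance of $\boldsymbol{S}$: each $S_i$ is convex and $\hat{s}_i\in S_i$, so the vector field $\hat{s}_i-s_i$ points into $S_i$. Together with the regularity of the right-hand side noted before the theorem, this gives a unique global solution staying in the compact set $\boldsymbol{S}$.

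The key computation is $\frac{d}{dt}V(\boldsymbol{s}^t)<0$ whenever $\boldsymbol{s}^t\neq\boldsymbol{s}^{\ast}$. We differentiate $\Phi(\hat{s}_i(\boldsymbol{s}),\boldsymbol{s}_{-i})$ using an envelope/Danskin argument: $\hat{s}_i$ maximises a strictly concave function over a fixed polytope $S_i$. So the derivative of the maximal value with respect to $\boldsymbol{s}_{-i}$ is $\nabla_{-i}\Phi(\hat{s}_i,\boldsymbol{s}_{-i})$, and the dependence of $\hat{s}_i$ on $\boldsymbol{s}_{-i}$ does not contribute. We then write each partial derivative through the explicit gradient
\begin{align*}
    (\nabla\Phi(\boldsymbol{s}))_{ij}=(r_{\text{min}}-r_{\text{max}})\left(\frac{1}{d_j}\left(s_{ij}+\sum_{k}s_{kj}\right)-1\right),
\end{align*}
and expand $\frac{d}{dt}V$ in terms of the displacements $\boldsymbol{\Delta}_i=\hat{s}_i-s_i$. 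Since $\Phi$ is quadratic (\ref{equation:quadratic_function}), every difference of gradients is linear in the $\boldsymbol{\Delta}_k$. First-order optimality of $\hat{s}_i$ gives $\nabla_i\Phi(\hat{s}_i,\boldsymbol{s}_{-i})\cdot(s_i-\hat{s}_i)\leq0$, and the same condition controls the terms where lender $i$'s own gradient appears. After cancellation, we aim to show
\begin{align*}
    \frac{d}{dt}V\leq -V-\frac{r_{\text{max}}-r_{\text{min}}}{2}\sum_{j\in B}\frac{1}{d_j}\left(\sum_{i}\Delta_{ij}^2+\left(\sum_i\Delta_{ij}\right)^2\right)\leq -V,
\end{align*}
i.e.\ a negative semidefinite quadratic form in $\boldsymbol{\Delta}$ with the Hessian structure of Lemma \ref{lemma:strict_concavity}, minus $V$ itself. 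This is the standard shape for best-response dynamics in concave potential games (Hofbauer--Sorin).

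The main obstacle is this cancellation: the cross terms $\sum_{k\neq i}$ coupling $\boldsymbol{\Delta}_i$ and $\boldsymbol{\Delta}_k$ must combine into the Hessian form rather than an indefinite one. We would exploit the symmetry of the Hessian of $\Phi$: in exact potential games the cross-partials $\partial^2\Phi/\partial s_{ij}\partial s_{kj}$ are symmetric, which is exactly what makes the terms pair up. If the bound degenerates to $\frac{d}{dt}V\leq0$ only, we fall back on LaSalle's invariance principle on the compact invariant set $\boldsymbol{S}$. The largest invariant subset of $\{\dot V=0\}$ must have all $\boldsymbol{\Delta}_i=0$, by strict concavity from Lemma \ref{lemma:strict_concavity}, hence equals $\{\boldsymbol{s}^{\ast}\}$.

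Given the bound $\frac{d}{dt}V\leq-V$, Grönwall's inequality yields $V(\boldsymbol{s}^t)\leq e^{-t}V(\boldsymbol{s}^0)\to0$. Since $V$ is continuous on the compact set $\boldsymbol{S}$ and vanishes only at $\boldsymbol{s}^{\ast}$, this forces $\boldsymbol{s}^t\to\boldsymbol{s}^{\ast}$ from any initial state.
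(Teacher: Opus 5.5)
The gap is exactly the step you flag as the main obstacle. You never compute $\frac{d}{dt}V$, and the inequality you aim for is false.

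Your Danskin computation can in fact be done exactly, with no inequality. Write $H$ for the constant Hessian of $\Phi$, $H_{ik}$ for its $n\times n$ block between lenders $i$ and $k$, and $\boldsymbol{\Delta}_i=\hat{s}_i-s_i$. Danskin gives $\frac{d}{dt}\max_{z_i\in S_i}\Phi(z_i,\boldsymbol{s}_{-i})=\sum_{k\neq i}\nabla_k\Phi(\hat{s}_i,\boldsymbol{s}_{-i})\cdot\boldsymbol{\Delta}_k$. Here $\nabla_k\Phi(\hat{s}_i,\boldsymbol{s}_{-i})-\nabla_k\Phi(\boldsymbol{s})=H_{ki}\boldsymbol{\Delta}_i$. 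The exact expansion $\Phi(\hat{s}_i,\boldsymbol{s}_{-i})-\Phi(\boldsymbol{s})=\nabla_i\Phi(\boldsymbol{s})\cdot\boldsymbol{\Delta}_i+\frac12\boldsymbol{\Delta}_i^\top H_{ii}\boldsymbol{\Delta}_i$ then removes the own-gradient terms, giving
\[
\frac{d}{dt}V=-V+\boldsymbol{\Delta}^\top H\boldsymbol{\Delta}-\frac12\sum_{i\in L}\boldsymbol{\Delta}_i^\top H_{ii}\boldsymbol{\Delta}_i=-V-\sum_{j\in B}\frac{r_{\text{max}}-r_{\text{min}}}{d_j}\left(\sum_{i\in L}\Delta_{ij}\right)^2 .
\]
The $\sum_i\Delta_{ij}^2$ part of your claimed bound does not survive, and the bound genuinely fails. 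Take $m=2$, $n=1$, $d_1=1$, $r_{\text{max}}-r_{\text{min}}=1$, $c_1=c_2=1$ and $\boldsymbol{s}=(0.6,0.1)$. The best responses are interior, $(\hat{s}_1,\hat{s}_2)=(0.45,0.2)$, with $V=0.0325$ and $\frac{d}{dt}V=-0.035$. Your inequality would demand $\frac{d}{dt}V\leq-0.05$.

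Symmetry of the Hessian is also not what makes the cross terms combine. The first equality above holds for every concave quadratic potential. For two scalar players with Hessian entries $-2$ on the diagonal and $3/2$ off it, the resulting form is indefinite, so $\frac{d}{dt}V\leq-V$ is not a general fact. (The Hofbauer--Sorin identity $\frac{d}{dt}V=-V$ concerns concave-convex zero-sum games, not potential games.) It holds here only because, within each borrower block, the off-diagonal Hessian entries are exactly half the diagonal ones. Your LaSalle fallback does not close the gap either, since it presupposes $\frac{d}{dt}V\leq 0$, which is the same uncomputed cancellation.

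With the identity above your route does go through. It even yields the exponential rate $V(\boldsymbol{s}^t)\leq e^{-t}V(\boldsymbol{s}^0)$ for the total best-response gain, which the paper does not obtain.

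Alternatively, use the first-order condition you mention correctly: $\nabla_i\Phi(\boldsymbol{s})\cdot\boldsymbol{\Delta}_i=\nabla_i\Phi(\hat{s}_i,\boldsymbol{s}_{-i})\cdot\boldsymbol{\Delta}_i-\boldsymbol{\Delta}_i^\top H_{ii}\boldsymbol{\Delta}_i\geq-\boldsymbol{\Delta}_i^\top H_{ii}\boldsymbol{\Delta}_i$. This gives $\frac{d}{dt}V\leq\boldsymbol{\Delta}^\top H\boldsymbol{\Delta}<0$ whenever $\boldsymbol{s}\neq\boldsymbol{s}^{\ast}$, for any strictly concave quadratic potential.

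The paper instead takes the alternative you listed and dropped, $V=\Phi(\boldsymbol{s}^{\ast})-\Phi(\boldsymbol{s})$. Then $\frac{d}{dt}V=-\sum_i\nabla_i\Phi(\boldsymbol{s})\cdot\boldsymbol{\Delta}_i$. Concavity along the segment from $\boldsymbol{s}$ to $(\hat{s}_i,\boldsymbol{s}_{-i})$ gives $\nabla_i\Phi(\boldsymbol{s})\cdot\boldsymbol{\Delta}_i\geq\Phi(\hat{s}_i,\boldsymbol{s}_{-i})-\Phi(\boldsymbol{s})\geq 0$, strictly when $\hat{s}_i\neq s_i$. No envelope theorem is needed and no cross terms between lenders ever arise, which is why that route takes essentially one line.
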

\begin{proof}
    We apply Lyapunov's second method to the continuous-time best-response dynamics. Therefore, we need to show that there exists a radially unbounded Lyapunov function for the dynamics for which it holds that its time derivative is negative everywhere except at the pure Nash equilibrium $\boldsymbol{s}^{\ast}$. We consider the following candidate function:
    \begin{align*}
        V(\boldsymbol{s}^t)=\Phi(\boldsymbol{s}^{\ast})-\Phi(\boldsymbol{s}^t).
    \end{align*}
    Clearly, this is a valid Lyapunov function for the dynamics, since $V(\boldsymbol{s}^{\ast})=0$, and $V(\boldsymbol{s}^t)>0$ if $\boldsymbol{s}^t\neq\boldsymbol{s}^{\ast}$ where $\boldsymbol{s}^{\ast}$ is the unique maximiser of the potential function $\Phi$ (by Theorem \ref{theorem:uniqueness}). As it clearly holds that $V(\boldsymbol{x})\to\infty$ as $\|\boldsymbol{x}\|\to\infty$, the function $V$ is radially unbounded. In addition, let $\hat{s}_i^t$ be the unique strategy in $\text{BR}_{i}(\boldsymbol{s}^t)$, and then the time derivative of $V(\boldsymbol{s}^t)$ is given by
    \begin{align}\label{equation:time_derivative}
        \dfrac{dV(\boldsymbol{s}^t)}{dt}=\sum_{i\in L}\sum_{j\in B}\dfrac{\partial V(\boldsymbol{s}^t)}{\partial s_{ij}^t}\cdot\dfrac{ds_{ij}^t}{dt}=-\sum_{i\in L}\sum_{j\in B}\dfrac{\partial\Phi(\boldsymbol{s}^t)}{\partial s_{ij}^t}(\hat{s}_{ij}^t-s_{ij}^t). 
    \end{align}
    For $i\in L$, we consider the $i$-th term of the outer summation of (\ref{equation:time_derivative}),
    \begin{align}\label{equation:one_term}
        \sum_{j\in B}\dfrac{\partial\Phi(\boldsymbol{s}^t)}{\partial s_{ij}^t}(\hat{s}_{ij}^t-s_{ij}^t).
    \end{align}
    Then, define $\boldsymbol{v}$ by
    \begin{align*}
        v_{kj}=\left\{
        \begin{array}{ll}
            \hat{s}_{ij}^t-s_{ij}^t & k=i, \\
            0 & k\neq i, \\
        \end{array}
        \right. \quad \forall j\in B
    \end{align*}
    Equation (\ref{equation:one_term}) can then be written as $\boldsymbol{v}\cdot\nabla\Phi(\boldsymbol{s}^t)$. Now observe that the strategy profile $(\hat{s}_i^t,\boldsymbol{s}_{-i}^t)$ can be reached by moving from $\boldsymbol{s}^t$ in the direction $\boldsymbol{v}$, and $\Phi(\hat{s}_i^t,\boldsymbol{s}_{-i}^t)-\Phi(\boldsymbol{s}^t)=u_{i}(\hat{s}_i^t,\boldsymbol{s}_{-i}^t)-u_{i}(\boldsymbol{s^t})\geq0$, where the inequality is strict whenever $\hat{s}_i^t\neq s_i^t$. Concavity of $\Phi$ along the line segment between $\boldsymbol{s}$ and $(\hat{s}_i^t,\boldsymbol{s}_{-i}^t)$ then implies that $\boldsymbol{v}\cdot\nabla\Phi(\boldsymbol{s}^t)$ is non-negative, and is positive if $\hat{s}_i^t\neq s_i^t$. As this holds for every term of the outer summation of (\ref{equation:time_derivative}), we can conclude that $dV(\boldsymbol{s}^t)/dt<0$ if $s_i^t\notin\text{BR}_{i}(\boldsymbol{s}^t)$ for at least one $i\in L$. Noting that $\boldsymbol{s}^t\neq\boldsymbol{s}^{\ast}$ if there exists $k\in L$ such that $s_{k}^t\notin\text{BR}_k(\boldsymbol{s}^t)$, we find that for all $\boldsymbol{s}^t\in\boldsymbol{S}\setminus\{\boldsymbol{s}^{\ast}\}$,
    \begin{align*}
        \frac{dV(\boldsymbol{s}^t)}{dt}<0.
    \end{align*}
    Thus, according to Lyapunov's second method, the continuous-time best-response dynamics is globally asymptotically stable, which indicates that the dynamics will converge to the unique pure Nash equilibrium $\boldsymbol{s}^{\ast}$ from any initial strategy profile.
\end{proof}

\balance

\section{Conclusion}
In this paper, we define a non-cooperative lending game in the interbank money market in which lending banks strategically allocate their budget to borrowing banks. We show that this lending game is an exact potential game with a unique pure Nash equilibrium. We then characterise the pure Nash equilibrium of the game and propose a strongly polynomial-time algorithm to compute the equilibrium. From an economic perspective, our results indicate that even if the interbank money market is an over-the-counter market, the interest rates offered by borrowing banks are identical at equilibrium, which implies that information exchange and perfect competition among banks still guide the market towards a market interest rate. In addition, we study some variants of best-response dynamics of the lending game and prove their convergence to the pure Nash equilibrium in both discrete-time and continuous-time scenarios, indicating that banks will autonomously reach and remain stable at the pure Nash equilibrium under any initial market conditions. Some of our results on the convergence of best-response dynamics apply more generally to continuous games with strictly concave potential functions; therefore, our results on best-response dynamics may be of independent (and more general) interest.



\begin{acks}
    Bart de Keijzer was supported by the EPSRC grant EP/X021696/1. Carmine Ventre was supported by the Engineering and Physical Sciences Research Council (EPSRC) and UKFin+ network (Grant number EP/W034042/1). Jinyun Tong was supported by King's College London via the KDC and the China Scholarship Council via the K-CSC Scholarship.
\end{acks}



\bibliographystyle{ACM-Reference-Format} 
\bibliography{References}

\appendix

\section{Proof of Theorem 2.1}\label{apx:theorem:potential_game}
\begin{proof}
    Recall that the \textit{interest rate} offered by a borrower $j\in B$ is given by
    \begin{equation}\label{equation:interest_rate_repeated}
        r_j(\boldsymbol{s})=(r_\text{min}-r_\text{max})\frac{\sum_{i\in L}s_{ij}}{d_j}+r_\text{max}
    \end{equation}
    where $r_\text{min}$ and $r_\text{max}$ are the minimum and maximum interest rates, satisfying $0<r_\text{min}<r_\text{max}$. And for each lender $i\in L$, the \textit{utility function} is given by \begin{equation}\label{equation:utility_function_repeated}
        u_i(\boldsymbol{s})=\sum_{j\in B}(r_j(\boldsymbol{s})-r_\text{min})s_{ij}.
    \end{equation}
    An \textit{exact potential game} is a game for which there exists a \textit{potential function} $\Phi$, such that 
    \begin{equation}\label{equation:potential_function_proof_repeated}
         \Phi(s_k^\prime,\boldsymbol{s}_{-k})-\Phi(s_k,\boldsymbol{s}_{-k})=u_k(s_k^\prime,\boldsymbol{s}_{-k})-u_k(s_k,\boldsymbol{s}_{-k})
    \end{equation}
    holds for each player $k\in L$, any two strategies $s_k,s_k^\prime\in S_k$ and each strategy profile $\boldsymbol{s}_{-k}\in\boldsymbol{S}_{-k}$. For a subset $[z]=\{1,\ldots,z\}$ of the first $z$ lenders in $L$, let
    \begin{equation}\label{equation:interest_rate_first_z_repeated}
        r_j(\boldsymbol{s},z)=(r_\text{min}-r_\text{max})\frac{\sum_{i\in[z]}s_{ij}}{d_j}+r_\text{max}.
    \end{equation}
    Consider the following function as the candidate of the potential function of the lending game $G$.
    \begin{equation}\label{equation:potential_function_repeated}
        \Phi(\boldsymbol{s})=\sum_{j\in B}\sum_{i\in L}(r_j(\boldsymbol{s},i)-r_\text{min})s_{ij}.
    \end{equation}
    
    Now, we need to prove that (\ref{equation:potential_function_proof_repeated}) holds for any $k\in L$, $s_k,s_k^\prime\in S_k$, and $\boldsymbol{s}_{-k}\in\boldsymbol{S}_{-k}$. According to the definitions in (\ref{equation:interest_rate_first_z_repeated}) and (\ref{equation:potential_function_repeated}), the potential function can be written as
    \begin{equation*}
        \Phi(\boldsymbol{s})=\displaystyle\sum_{j\in B}\sum_{i\in L}\left((r_\text{min}-r_\text{max})\frac{\sum_{\ell\in [i]}s_{\ell j}}{d_j}+r_\text{max}-r_\text{min}\right)s_{ij}.
    \end{equation*}
    Let $k$ be an arbitrary bank switching its strategy from $s_k$ to $s_k^\prime=(s_{k1}^\prime,\dots,s_{kn}^\prime)$, while the other banks remain with their current strategies; therefore, the strategy profile shifts from $\boldsymbol{s}=(s_k,\boldsymbol{s}_{-k})$ to $\boldsymbol{s}^\prime=(s_k^\prime,\boldsymbol{s}_{-k})$. Then, the potential function after the strategy switch can be written as
    \begin{align*}
        & \Phi(s_k^\prime,\boldsymbol{s}_{-k}) \\ & \qquad =\displaystyle\sum_{j\in B}\left(\sum_{i\in L\setminus\{k\}}\left((r_\text{min}-r_\text{max})\dfrac{\sum_{\ell\in[i]\setminus\{k\}}s_{\ell j}+s_{kj}^\prime\boldsymbol{1}[k\leq i]}{d_j} \right.\right. \\
        & \qquad \qquad  +r_\text{max}-r_\text{min}\Bigg)s_{ij} \\
        & \qquad \left.+\left((r_\text{min}-r_\text{max})\dfrac{\sum_{\ell\in[k-1]}s_{\ell j}+s_{kj}^\prime}{d_j}+r_\text{max}-r_\text{min}\right)s_{kj}^\prime\right)
    \end{align*}
    where $\boldsymbol{1}[A]$ denotes the indicator function that maps to $1$ if $A$ is true, and to $0$ otherwise. Therefore, the change in potential is
    \begin{align*}
        & \Phi(s_k^\prime,\boldsymbol{s}_{-k})-\Phi(s_k,\boldsymbol{s}_{-k}) \\
        & \quad =\displaystyle\sum_{j\in B}\left(\sum_{i\in L\setminus\{k\}}\left((r_\text{min}-r_\text{max})\dfrac{\sum_{\ell\in[i]\setminus\{k\}}s_{\ell j}+s_{kj}^\prime\boldsymbol{1}[k\leq i]}{d_j} \right.\right. \\
        & \qquad \qquad +r_\text{max}-r_\text{min}\Bigg)s_{ij} \\
        & \qquad \left.+\left((r_\text{min}-r_\text{max})\dfrac{\sum_{\ell\in[k-1]}s_{\ell j}+s_{kj}^\prime}{d_j}+r_\text{max}-r_\text{min}\right)s_{kj}^\prime\right) \\
        & \qquad -\displaystyle\sum_{j\in B}\left(\sum_{i\in L\setminus\{k\}}\left((r_\text{min}-r_\text{max})\dfrac{\sum_{\ell\in[i]\setminus\{k\}}s_{\ell j}+s_{kj}\boldsymbol{1}[k\leq i]}{d_j} \right. \right. \\
        &\qquad \qquad +r_\text{max}-r_\text{min}\Bigg)s_{ij} \\
        & \qquad \left.+\left((r_\text{min}-r_\text{max})\dfrac{\sum_{\ell\in[k-1]}s_{\ell j}+s_{kj}}{d_j}+r_\text{max}-r_\text{min}\right)s_{kj}\right).
    \end{align*}
    For the first and third fractions, the innermost summations and the $r_\text{max}-r_\text{min}$ terms cancel out, so we can simplify and rewrite the expression as
    
    \begin{align*}
        & \Phi(s_k^\prime,\boldsymbol{s}_{-k})-\Phi(s_k,\boldsymbol{s}_{-k}) \\
        & \quad =\displaystyle\sum_{j\in B}\left(\sum_{i\in L\setminus\{k\}}(r_\text{min}-r_\text{max})\dfrac{(s_{kj}^\prime-s_{kj})\boldsymbol{1}[k\leq i]}{d_j}s_{ij}\right. \\
        & \qquad \left.+(r_\text{min}-r_\text{max})\dfrac{\sum_{\ell\in[k-1]}s_{\ell j}}{d_j}(s_{kj}^\prime-s_{kj})\right. \\
        & \qquad \left.+(r_\text{min}-r_\text{max})\dfrac{s_{kj}^\prime s_{kj}^\prime-s_{kj}s_{kj}}{d_j}+(r_\text{max}-r_\text{min})(s_{kj}^\prime-s_{kj})\right) \\
        & \quad =\displaystyle\sum_{j\in B}\left((r_\text{min}-r_\text{max})\dfrac{\sum_{i\in L\setminus[k]}s_{ij}}{d_j}(s_{kj}^\prime-s_{kj})\right. \\
        & \qquad \left.+(r_\text{min}-r_\text{max})\dfrac{\sum_{i\in[k-1]}s_{ij}}{d_j}(s_{kj}^\prime-s_{kj})\right. \\
        & \qquad \left.+(r_\text{min}-r_\text{max})\dfrac{s_{kj}^\prime s_{kj}^\prime-s_{kj}s_{kj}}{d_j}+(r_\text{max}-r_\text{min})(s_{kj}^\prime-s_{kj})\right) \\
        & \quad =\displaystyle\sum_{j\in B}\left((r_\text{min}-r_\text{max})\dfrac{\sum_{i\in L\setminus\{k\}}s_{ij}+s_{kj}^\prime}{d_j}+r_\text{max}-r_\text{min}\right)s_{kj}^\prime \\
        & \qquad -\displaystyle\sum_{j\in B}\left((r_\text{min}-r_\text{max})\dfrac{\sum_{i\in L\setminus\{k\}}s_{ij}+s_{kj}}{d_j}+r_\text{max}-r_\text{min}\right)s_{kj}.
    \end{align*}
    According to the definitions in (\ref{equation:interest_rate_repeated}) and (\ref{equation:utility_function_repeated}), the expression is equivalent to
    \begin{align*}
        & \Phi(s_k^\prime,\boldsymbol{s}_{-k})-\Phi(s_k,\boldsymbol{s}_{-k}) \\ & \qquad =\displaystyle\sum_{j\in B}\left(r_j(s_k^\prime,\boldsymbol{s}_{-k})-r_\text{min}\right)s_{kj}^\prime-\displaystyle\sum_{j\in B}(r_j(s_k,\boldsymbol{s}_{-k})-r_\text{min})s_{kj} \\
        & \qquad =u_k(s_k^\prime,\boldsymbol{s}_{-k})-u_k(s_k,\boldsymbol{s}_{-k}).
    \end{align*}
    Therefore, the game $G$ is an exact potential game with potential function $\Phi$.
\end{proof}

\section{Proof of Lemma 3.2}\label{apx:lemma:strict_concavity}
\begin{proof}
    Consider two arbitrary but different strategies $\boldsymbol{s}=(s_{ij})_{i\in L,j\in B}\in\boldsymbol{S}$ and $\boldsymbol{s}^\prime=(s_{ij}^\prime)_{i\in L,j\in B}\in\boldsymbol{S}$ such that $\boldsymbol{s}\neq\boldsymbol{s}^\prime$. Let $\boldsymbol{s}^\lambda=\lambda\boldsymbol{s}+(1-\lambda)\boldsymbol{s}^\prime$ where $\lambda\in(0,1)$. Then,
    \begin{align*}
        & \lambda\Phi(\boldsymbol{s})+(1-\lambda)\Phi(\boldsymbol{s}^\prime)-\Phi(\boldsymbol{s}^\lambda) \\
        & \quad =\lambda\displaystyle\sum_{j\in B}\sum_{i\in L}(r_j(\boldsymbol{s},i)-r_\text{min})s_{ij}+(1-\lambda)\sum_{j\in B}\sum_{i\in L}\left(r_j(\boldsymbol{s}^\prime,i)-r_\text{min}\right)s_{ij}^\prime \\
        & \qquad -\displaystyle\sum_{j\in B}\sum_{i\in L}\left(r_j(\lambda\boldsymbol{s}+(1-\lambda)\boldsymbol{s}^\prime,i)-r_\text{min}\right)(\lambda s_{ij}+(1-\lambda)s_{ij}^\prime) \\
        & \quad =\displaystyle\sum_{j\in B}\sum_{i\in L}\left[\left((r_\text{min}-r_\text{max})\dfrac{\sum_{\ell\in[i]}s_{\ell j}}{d_j}+r_\text{max}-r_\text{min}\right)\lambda s_{ij}\right. \\
        & \qquad \left.+\left((r_\text{min}-r_\text{max})\dfrac{\sum_{\ell\in[i]}s_{\ell j}^\prime}{d_j}+r_\text{max}-r_\text{min}\right)(1-\lambda)s_{ij}^\prime\right. \\
        & \qquad -\left((r_\text{min}-r_\text{max})\dfrac{\sum_{\ell\in[i]}(\lambda s_{\ell j}+(1-\lambda)s_{\ell j}^\prime)}{d_j}+r_\text{max}-r_\text{min}\right) \\
        & \qquad \qquad \cdot (\lambda s_{ij}+(1-\lambda)s_{ij}^\prime)\Bigg] \\
        & \quad =\displaystyle\sum_{j\in B}\sum_{i\in L}\left[\dfrac{r_\text{min}-r_\text{max}}{d_j}\left(\lambda s_{ij}\sum_{\ell\in[i]}s_{\ell j}+(1-\lambda)s_{ij}^\prime\sum_{\ell\in[i]}s_{\ell j}^\prime\right.\right. \\
        & \qquad -\lambda^2 s_{ij}\sum_{\ell\in[i]}s_{\ell j}-\lambda(1-\lambda)s_{ij}\displaystyle\sum_{\ell\in[i]}s_{\ell j}^\prime \\ 
        & \qquad \left.\left. -\lambda(1-\lambda)s_{ij}^\prime\sum_{\ell\in[i]}s_{\ell j}-(1-\lambda)^2 s_{ij}^\prime\sum_{\ell\in[i]}s_{\ell j}^\prime\right)\right] \\
        & \quad =\displaystyle\sum_{j\in B}\sum_{i\in L}\dfrac{r_\text{min}-r_\text{max}}{d_j}\lambda(1-\lambda)(s_{ij}-s_{ij}^\prime)\left(\sum_{\ell\in[i]}s_{\ell j}-\sum_{\ell\in[i]}s_{\ell j}^\prime\right) \\
        & \quad =\lambda(1-\lambda)\displaystyle\sum_{j\in B}\dfrac{r_\text{min}-r_\text{max}}{2d_j}\left(\sum_{i\in L}(s_{ij}-s_{ij}^\prime)^2+\left(\sum_{i\in L}(s_{ij}-s_{ij}^\prime)\right)^2\right) \\
        & \quad <0,
    \end{align*}
    where in the last step we used that $\boldsymbol{s}\neq\boldsymbol{s}^\prime$ (i.e., there exists $i\in L,j\in B$ such that $s_{ij}\neq s_{ij}^{\prime}$), and that $r_\text{min}<r_\text{max}$, so that all terms in the final summation are non-positive, and at least one is strictly negative. Therefore, $\lambda\Phi(\boldsymbol{s})+(1-\lambda)\Phi(\boldsymbol{s}^\prime)<\Phi(\boldsymbol{s}^\lambda)$ holds for any $\lambda\in(0,1)$, and the potential function $\Phi$ is strictly concave over $\boldsymbol{S}$.
\end{proof}

\section{Proof of Theorem 3.4}\label{apx:theorem:solution}
\begin{proof}
    Let $\bar{L}=[\bar{m}]$ where $\bar{m}$ is the least index in $[m]\cup\{0\}$ for which it holds that
    \begin{equation}\label{equation:bar_m_repeated}
        c_{\bar{m}+1}>\frac{1}{m-\bar{m}+1}\left(\sum_{j\in B}d_j-\sum_{\ell\in[\bar{m}]}c_{\ell}\right).
    \end{equation}
    Then, recall the KKT multipliers $\boldsymbol{\mu}$ obtained in Section 3.2, i.e.,
    \begin{equation}\label{equation:multipliers_repeated}
    \resizebox{\linewidth}{!}{$
        \mu_i=\left\{
        \begin{array}{ll}
            (r_\text{min}-r_\text{max})\left(\dfrac{c_i}{\sum_{k\in B}d_k}-\dfrac{1}{m-\bar{m}+1}\left(1-\dfrac{\sum_{\ell\in\bar{L}}c_\ell}{\sum_{k\in B}d_k}\right)\right) & \text{if } i\in\bar{L}, \\
            0 & \text{if } i\in L\setminus\bar{L}.
        \end{array}
        \right. \\
    $}
    \end{equation}
    \begin{equation*}
        \mu_{ij}=0 \quad \forall i\in L,j\in B.
    \end{equation*}
    And recall the equilibrium strategy profile $\boldsymbol{s}^\ast$ obtained in Section 3.2. That is, for any $i\in L,j\in B$,
    \begin{equation}\label{equation:solution_repeated}
        s_{ij}^\ast=\left\{
        \begin{array}{ll}
            \dfrac{c_i}{\sum_{k\in B}d_k}\cdot d_j & \text{if } i\in\bar{L}, \\
            \dfrac{1}{m-\bar{m}+1}\left(1-\dfrac{\sum_{\ell\in\bar{L}}c_\ell}{\sum_{k\in B}d_k}\right)\cdot d_j & \text{if } i\in L\setminus\bar{L}. \\
        \end{array}
        \right.
    \end{equation}
    Now, we show that $\boldsymbol{\mu}$ and $\boldsymbol{s}^\ast$ given by (\ref{equation:multipliers_repeated}) and (\ref{equation:solution_repeated}), respectively, satisfy all KKT conditions.
    \begin{enumerate}
        \item \textbf{Primal feasibility}: Primal feasibility holds because 
        \begin{enumerate}
            \item For $i\in\bar{L}$, it follows directly from (\ref{equation:solution_repeated}) that $\sum_{j\in B}s_{ij}^\ast=c_i$,
            \item For $i\in L\setminus{\bar{L}}$, it follows from (\ref{equation:solution_repeated}) that
            \begin{equation*}
                \sum_{j\in B}s_{ij}^\ast=\frac{1}{m-\bar{m}+1}\left(\sum_{j\in B}d_j-\sum_{\ell\in\bar{L}}c_{\ell}\right)<c_{\bar{m}+1}\leq c_i,
            \end{equation*} 
            where the strict inequalities hold by the definition of $\bar{m}$ (see Section 3.2), and the last inequality holds by the fact that the lenders $L$ are assumed to be ordered non-decreasingly in budget.
        \end{enumerate}
        \item \textbf{Stationarity}: Recall the Lagrangian function defined in Section 3.2:
        \begin{equation*}
            \mathcal{L}(\boldsymbol{s},\boldsymbol{\mu})=\Phi(\boldsymbol{s})+\displaystyle\sum_{i\in L}\mu_i\left(c_i-\sum_{j\in B}s_{ij}\right)+\displaystyle\sum_{i\in L}\sum_{j\in B}\mu_{ij}s_{ij}.
        \end{equation*}
        The partial derivative of $\mathcal{L}(\boldsymbol{s},\boldsymbol{\mu})$ is given by
        \begin{equation}\label{equation:lagrangian_function_partial_derivative}
            \frac{\partial\mathcal{L}}{\partial s_{ij}}(\boldsymbol{s})=(r_\text{min}-r_\text{max})\left(\frac{1}{d_j}\left(s_{ij}+\sum_{k\in L}s_{kj}\right)-1\right)-\mu_i+\mu_{ij} 
        \end{equation}
        for all $i\in L,j\in B$.
        Stationarity follows from verifying that $\frac{\partial\mathcal{L}}{\partial s_{ij}}(\boldsymbol{s}^\ast)=0$ for all $i\in L,j\in B$. First, we note that
        \begin{align}
            \sum_{i\in L}s_{ij}^\ast & =\sum_{i\in\bar{L}}s_{ij}^\ast+\sum_{i\in L\setminus\bar{L}}s_{ij}^\ast \notag \\
            & =d_j\left(\frac{\sum_{\ell\in\bar{L}}c_\ell}{\sum_{k\in B}d_k}+\frac{m-\bar{m}}{m-\bar{m}+1}\left(1-\frac{\sum_{\ell\in\bar{L}}c_\ell}{\sum_{k\in B}d_k}\right)\right) \notag \\
            & =\frac{d_j}{m-\bar{m}+1}\left(m-\bar{m}+\frac{\sum_{\ell\in\bar{L}}c_\ell}{\sum_{k\in B}d_k}\right). \label{equation:sum_s_ij}
        \end{align} 
        For $i\in\bar{L},j\in B$, inserting the above expression along with (\ref{equation:solution_repeated}) and (\ref{equation:multipliers_repeated}) into (\ref{equation:lagrangian_function_partial_derivative}), we obtain
        \begin{align*}
            & \frac{\partial\mathcal{L}}{\partial s_{ij}}(\boldsymbol{s}^\ast) \\
            & =(r_\text{min}-r_\text{max})\left(\frac{1}{d_j}\left(\frac{c_i}{\sum_{k\in B}d_k}d_j \right. \right. \\
            & \qquad \left. \left. + \frac{d_j}{m-\bar{m}+1}\left(m-\bar{m}+\frac{\sum_{\ell\in\bar{L}}c_{\ell}}{\sum_{k\in B}d_k}\right)\right)-1\right)\\
            & \quad -(r_\text{min}-r_\text{max})\left(\frac{c_i}{\sum_{k\in B}d_k}-\frac{1}{m-\bar{m}+1}\left(1-\frac{\sum_{\ell\in\bar{L}}c_{\ell}}{\sum_{k\in B}d_k}\right)\right) \\
            & = (r_\text{min}-r_\text{max})\left(\frac{c_i}{\sum_{k\in B}d_k} \right. \\ 
            & \qquad \left. +\frac{1}{m-\bar{m}+1}\left(m-\bar{m}+\frac{\sum_{\ell\in\bar{L}}c_{\ell}}{\sum_{k\in B}d_k}\right)-1\right) \\
            & \quad -(r_\text{min}-r_\text{max})\left(\frac{c_i}{\sum_{k\in B}d_k}-\frac{1}{m-\bar{m}+1}\left(1-\frac{\sum_{\ell\in\bar{L}}c_{\ell}}{\sum_{k\in B} d_k}\right)\right) \\
            & =0.
        \end{align*}
        For $i\in L\setminus\bar{L}$ and $j\in B$, we have $\mu_i=\mu_{ij}=0$, and then combine (\ref{equation:solution_repeated}) and (\ref{equation:sum_s_ij}), with (\ref{equation:lagrangian_function_partial_derivative}), 
        \begin{align*}
            \frac{\partial\mathcal{L}}{\partial s_{ij}}(\boldsymbol{s}^\ast) & =(r_\text{min}-r_\text{max})\left(\frac{1}{d_j}\left(\frac{1}{m-\bar{m}+1}\left(1-\frac{\sum_{\ell\in\bar{L}} c_{\ell}}{\sum_{k\in B}d_k}\right)d_j\right.\right. \\ 
            & \quad \left.\left.+\frac{d_j}{m-\bar{m}+1}\left(m-\bar{m}+\frac{\sum_{i\in\bar{L}}c_i}{\sum_{k\in B} d_k}\right)\right)-1\right) \\
            & = (r_\text{min}-r_\text{max})\left(\left(\frac{1}{m-\bar{m}+1}\left(1-\frac{\sum_{\ell\in\bar{L}} c_{\ell}}{\sum_{k\in B} d_k}\right)\right.\right. \\ 
            & \quad \left.\left.+\frac{1}{m-\bar{m}+1}\left(m-\bar{m}+\frac{\sum_{i\in\bar{L}}c_i}{\sum_{k\in B}d_k}\right)\right)-1\right) \\
            & =0.
        \end{align*}
        Then, We conclude that the stationarity conditions hold.
    
        \item \textbf{Dual feasibility}: The dual feasibility conditions hold trivially for $\mu_{ij}$ for all $i\in L,j\in B$, and also for $\mu_i$ in case $i\in L\setminus\bar{L}$.
        
        For $i\in\bar{L}$, observe that the first factor $(r_\text{min}-r_\text{max})$ in the expression for $\mu_i$ in (\ref{equation:multipliers_repeated}) is negative. We thus need to prove that the second factor is non-positive. To that end, we show firstly that $c_{\bar{m}}\leq\frac{1}{m-\bar{m}+1}\left(\sum_{i\in B}d_j-\sum_{\ell\in\bar{L}}c_\ell\right)$. Suppose for the sake of contradiction that the opposite holds: $c_{\bar{m}}>\frac{1}{m-\bar{m}+1}\left(\sum_{i\in B}d_j-\sum_{\ell\in\bar{L}}c_\ell\right)$. Multiplying both sides by $m-\bar{m}+1$ and adding $c_{\bar{m}}$ to both sides of the inequality yields $(m-\bar{m}+2)c_{\bar{m}}>\sum_{j\in B}d_j-\sum_{\ell\in[\bar{m}-1]}c_{\ell}$. However, it would mean that (\ref{equation:bar_m_repeated}) holds for an index less than $\bar{m}$ (namely, it holds for $\bar{m}-1$), and contradicts the definition of $\bar{m}$ as the minimum index for which (\ref{equation:bar_m_repeated}) holds. Therefore, it implies that
        \begin{equation*}
            c_i\leq c_{\bar{m}}\leq\frac{1}{m-\bar{m}+1}\left(\sum_{j\in B}d_j-\sum_{\ell\in\bar{L}}c_{\ell}\right) \qquad \forall i\in\bar{L}.
        \end{equation*}
        By dividing both sides by $\sum_{j\in B}d_j$, we obtain that 
        \begin{equation*}
            \frac{c_i}{\sum_{j\in B}d_j}-\frac{1}{m-\bar{m}+1}\left(1-\frac{\sum_{\ell\in\bar{L}} c_{\ell}}{\sum_{j\in B}d_j}\right)\leq0,
        \end{equation*}
        which indicates that the right-hand side of the second factor of (\ref{equation:multipliers_repeated}) is non-negative. Therefore, it holds that $\mu_i\geq0$ for all $i\in L$, and thus all dual feasibility conditions hold true.
    
        \item \textbf{Complementary slackness}: The fact that the complementary slackness conditions are satisfied follows immediately from observing that $\mu_{ij}=0$ for all $i\in L,j\in B$, that $\mu_i=0$ for $i\in L\setminus\bar{L}$, and that $c_i-\sum_{j\in B}s_{ij}^\ast=0$ for $i\in\bar{L}$.
    \end{enumerate}
\end{proof}

\section{Proof of Lemma 4.2}\label{apx:lemma:gradient_bound}
\begin{proof}
    We can bound $\boldsymbol{v}\cdot\nabla\Phi(\boldsymbol{s^\prime})$ as follows.
    \begin{align*}
        \boldsymbol{v}\cdot\nabla\Phi(\boldsymbol{s}^\prime) & =\sum_{i,j}v_{ij}(\nabla\Phi(\boldsymbol{s}^\prime))_{ij} \\
        & \geq\sum_{i,j}v_{ij}((\nabla\Phi(\boldsymbol{s}))_{ij}-a(s_{ij}^\prime-s_{ij})) \\ 
        & =\boldsymbol{v}\cdot\nabla\Phi(\boldsymbol{s})-\sum_{i,j}v_{ij}a(s_{ij}^\prime-s_{ij}) \\
        & \geq\boldsymbol{v}\cdot\nabla\Phi(\boldsymbol{s})-\max_{i,j}|v_{ij}|a\sum_{i,j}\left|s_{ij}^\prime-s_{ij}\right| \\
        &  \geq\boldsymbol{v}\cdot\nabla\Phi(\boldsymbol{s})-\max_{i,j}|v_{ij}|a\frac{\boldsymbol{v}\cdot\nabla\Phi(\boldsymbol{s})}{2a\max_{i,j}|v_{ij}|}=\frac{\boldsymbol{v}\cdot\nabla\Phi(\boldsymbol{s})}{2}.
    \end{align*}
\end{proof}

\section{Proof of Proposition 4.7}\label{apx:prop47}
\begin{proof}
    By computing second derivatives of the utility functions, we observe that the Jacobian $\nabla g(\boldsymbol{s})$ of $g(\boldsymbol{s})$ has the entries
    \begin{equation*}
        \frac{\partial^2u_i(\boldsymbol{s})}{\partial s_{ij}\partial s_{k\ell}}=\left\{
        \begin{array}{ll}
            0 & \text{if } k\neq i \text{ and } \ell\neq j, \\
            (r_\text{min}-r_\text{max})/d_j & \text{if } k\neq i \text{ and } \ell=j, \\
            2(r_\text{min}-r_\text{max})/d_j & \text{if } k=i \text{ and } \ell=j. \\
        \end{array}
        \right.
    \end{equation*}
    Note that $\nabla g(\boldsymbol{s})$ is constant for all $((i,j),(k,\ell))\in(L\times B)^2$. In fact, it is equal to the Hessian of $\Phi$ which we used in the proof of Theorem 4.1.
    
    Since $\nabla g(\boldsymbol{s})$ is symmetric, and negative definiteness is preserved under summation and transposition, $\nabla g(\boldsymbol{s})+(\nabla g(\boldsymbol{s}))^\top$ is negative definite if $\nabla g(\boldsymbol{s})$ is negative definite. By reordering rows and columns so that coordinates are grouped per borrower, rather than per lender, we obtain
    \begin{equation*}
        \boldsymbol{J}=\texttt{diag}(J_1,\dots,J_n)
    \end{equation*}
    where $\texttt{diag}$ denotes block-diagonal matrix composition and for $j\in B$, $J_j$ is an $m\times m$ matrix where
    \begin{equation*}
        (J_j)_{ik}=\frac{\partial^2u_i(\boldsymbol{s})}{\partial s_{ij}\partial s_{kj}}=\left\{
        \begin{array}{ll}
            (r_\text{min}-r_\text{max})/d_j & \text{if } k\neq i, \\
            2(r_\text{min}-r_\text{max})/d_j & \text{if } k=i. \\
        \end{array}
        \right.
    \end{equation*}
    Considering that $r_\text{min}<r_\text{max}$, $\boldsymbol{J}$ is negative definite. In turn, since negative definiteness is preserved under block-diagonal matrix composition, it suffices to show that for every $j\in B$, $J_j$ is negative definite. To verify this, we simply observe, by using the expressions we established on the entries of $J_j$, that for any non-zero vector $\boldsymbol{v}\in\mathbb{R}^m$,
    \begin{equation*}
        \boldsymbol{v}^\top J_j\boldsymbol{v}=\frac{r_\text{min}-r_\text{max}}{d_j}\left(\sum_{i\in L}v_i^2+\left(\sum_{i\in L}v_i\right)^2\right)<0.
    \end{equation*}
    Therefore, we prove that $\nabla g(\boldsymbol{s})+(\nabla g(\boldsymbol{s}))^\top$ is negative definite.
\end{proof}

\end{document}